\newcommand{\ddd}{\raisebox{0.2em}[1.1em]{$\vdots$}}
\newcommand{\DD}{\mathcal{D}}
\newcommand{\ms}[1]{\mathsf{#1}}
\newcommand{\mb}[1]{\mathbf{#1}}
\newcommand{\uscore}{\mbox{\tt\char`\_}}
\newcommand{\arrow}{\mathbin{\rightarrow}}
\newcommand{\ctxrm}{\mathrel{\backslash}}
\newcommand{\tensor}{\otimes}
\newcommand{\lolli}{\multimap}
\newcommand{\with}{\mathbin{\binampersand}}
\newcommand{\one}{\mb{1}}
\newcommand{\zero}{\mb{0}}
\newcommand{\semi}{\mathrel{;}}
\newcommand{\jtrue}{\mathit{true}}
\newcommand{\mmode}[1]{{\mathchoice{\ms{#1}}{\ms{#1}}{\scriptscriptstyle\ms{#1}}{\scriptscriptstyle\ms{#1}}}}
\newcommand{\mL}{\mmode{L}}
\newcommand{\mU}{\mmode{U}}
\newcommand{\mS}{\mmode{S}}
\newcommand{\mX}{\mmode{X}}
\newcommand{\mV}{\mmode{V}}
\newcommand{\up}{{\uparrow}}
\newcommand{\down}{{\downarrow}}
\newcommand{\rgt}{\rhd}
\newcommand{\lft}{\blacktriangleleft}
\newcommand{\pss}{\blacktriangleright}
\newcommand{\plus}{\oplus}
\newcommand{\syn}{\Longrightarrow}
\newcommand{\ssyn}{\syn}
\newcommand{\chk}{\Longleftarrow}
\newcommand{\join}{\mathbin{;}}
\newcommand{\lub}{\sqcup}
\newcommand{\uses}{\mathrel{/}}
\newcommand{\bbar}{\mathord{\Vert}}
\title{Adjoint Natural Deduction\\(Extended Version)}
\author{\makebox[12em]{\begin{tabular}{c}Junyoung Jang \\ McGill University\end{tabular}}
  \and 
  \makebox[12em]{\begin{tabular}{c}Sophia Roshal \\ Carnegie Mellon University\end{tabular}}
  \and
  \makebox[12em]{\begin{tabular}{c}Frank Pfenning \\ Carnegie Mellon University\end{tabular}}
  \and
  \makebox[12em]{\begin{tabular}{c}Brigitte Pientka \\ McGill University\end{tabular}}}
\date{February 2, 2024}
\begin{document}

\maketitle

\begin{abstract}
  Adjoint logic is a general approach to combining multiple logics with
  different structural properties, including linear, affine, strict, and
  (ordinary) intuitionistic logics, where each proposition has an intrinsic mode
  of truth.  It has been defined in the form of a sequent calculus because the
  central concept of independence is most clearly understood in this form, and
  because it permits a proof of cut elimination following standard techniques.

  In this paper we present a natural deduction formulation of adjoint logic and
  show how it is related to the sequent calculus.  As a consequence, every
  provable proposition has a verification (sometimes called a long normal form).
  We also give a computational interpretation of adjoint logic in the form of a
  functional language and prove properties of computations that derive from the
  structure of modes, including freedom from garbage (for modes without
  weakening and contraction), strictness (for modes disallowing weakening), and
  erasure (based on a preorder between modes).  Finally, we present a
  surprisingly subtle algorithm for type checking.
\end{abstract}

\section{Introduction}
\label{sec:intro}

A \emph{substructural logic} provides fine control over the use of assumptions
during reasoning.  It usually does so by denying the general sequent calculus
rules of \emph{contraction} (which permits an antecedent to be used more than
once) and \emph{weakening} (which permits an antecedent not to be used).
Instead, these rules become available only for antecedents of the form ${!}A$.
Ever since the inception of linear logic \citep{Girard87tcs}, researchers have
found applications in programming languages, for example, to avoid garbage
collection \citep{Girard87tapsoft}, soundness of imperative update
\citep{Wadler90ifip}, the chemical abstract machine \citep{Abramsky93},
and session-typed communication \citep{Caires10concur,Wadler12icfp}, to name just a few.

Besides linear logic, there are other substructural logics and type systems of
interest.  For example, \emph{affine logic} denies general contraction but
allows weakening and is the basis for the type system of Alms \citep{Tov11popl}
(an affine functional language) and \citet{Rust} (an imperative
language aimed at systems programming).

If we deny weakening but accept contraction we obtain \emph{strict logic} (a
variant of \emph{relevance logic}) where every assumption must be used at least
once.  On the programming language side, this corresponds to \emph{strictness},
which allows optimizations in otherwise nonstrict functional languages such as
Haskell \citep{Mycroft80sop}.  Interestingly, Church's original $\lambda{I}$
calculus \citep{Church41book} was also strict in this sense.

The question arises how we can combine such features, both in logics and in type
systems.  Recently, this question has been tackled through graded or
quantitative type systems (see for example,
\citet{Atkey18lics,Moon:ESOP21,Choudhury:POPL21,Wood22esop,Abel:ICFP23}). The
essential idea is to track and reason explicitly about the usage of a given
assumption through grades. This provides very fine-grained control and allows us
to, for example, model linear, strict, and unrestricted usage of assumptions
through graded modalities.  In this paper, we pursue an alternative taking a
proof-theoretic view with the goal of building a computational interpretation.
There are three possible options that emerge from existing proof-theoretic
explorations that could serve as a foundation of such a computational
interpretation.  The first one is by \emph{embedding}.  For example, we can
embed (structural) intuitionistic logic in linear logic writing ${!}A \lolli B$
for $A \arrow B$.  Similarly, we can embed affine logic in linear logic by
mapping hypotheses $A$ to $A \with \one$ so they do not need to be used.  The
difficulties with such embeddings is that, often, they neither respect proof
search properties such as focusing \citep{Andreoli92jlc} nor do they achieve a
desired computational interpretation.

A second approach is taken by \emph{subexponential linear logic}
\citep{Nigam09ppdp,Nigam16jlc,Kanovich18ijcar} that defines multiple
subexponential modalities ${!}^mA$, where each \emph{mode} $m$ has a specific
set of structural properties.  As in linear logic, all inferences are carried
out on linear formulas, so while it resolves some of the issues with
embeddings, it still requires frequent movement into the linear layer
using explicit subexponentials.

We pursue a third approach, pioneered by \citet{Benton94csl} who symmetrically
combined (structural) intuitionistic logic with (purely) linear intuitionistic
logic.  He employs two adjoint modalities that switch between the two layers and
works out the proof theoretic and categorical semantics.  This approach has the
advantage that one can natively reason and compute within the individual logics,
so we preserve not only provability but the fine structure of proofs and proof
reduction from each component.  This has been generalized in prior work
\citep{Reed09un,Pruiksma18un} by incorporating from subexponential linear logic
the idea to have a preorder between modes $m \geq k$ that must be compatible
with the structural properties of $m$ and $k$ (explained in more detail in
\autoref{sec:seq}).  This means we can now also model intuitionistic S4
\citep{Pfenning01mscs} and lax logic \citep{Benton98}, representing comonadic
and monadic programming, respectively. We hence arrive at a unifying calculus firmly rooted in proof theory
that is more general than previous graded modal type systems in that we can
\emph{construct} monads as well as comonads. We will briefly address dependently typed variations
of the adjoint approach in \autoref{sec:conclusion}.

Most substructural logics and many substructural type systems are most clearly
formulated as sequent calculi.  However, natural deduction has not only an
important foundational role \citep{Gentzen35,Prawitz65,Dummett91}, it also has
provided a simple and elegant notation for functional programs through the
Curry-Howard correspondence \citep{Howard69}.  We therefore develop a system of
natural deduction for adjoint logic that, in a strong sense, corresponds to the
original sequent formulation.  It turns out to be surprisingly subtle because we
have to manage not only the substructural properties that may be permitted or
not, but also respect the preorder between modes.  We show that the our calculus
satisfies some expected properties like \emph{substitution} and has a natural
notion of \emph{verification} that corresponds to proofs in long normal form,
satisfying a subformula property.

In order to illustrate computational properties, we also give an abstract
machine and show the consequences of the mode structure: freedom from garbage
for linear modes (that is, modes admitting neither weakening nor contraction),
strictness for modes that do not admit weakening, and erasure for modes that a
final value may not depend on, based on the preorder of modes.  We close with an
algorithmic type checker for our language which, again, is surprisingly subtle.

\section{Adjoint Sequent Calculus}
\label{sec:seq}

We briefly review the adjoint sequent calculus from \citet{Pruiksma18un}.  We
start with a standard set of possibly substructural propositions, indexing each with a
\emph{mode of truth}, denoted by $m, k, n, r$.  Propositions are perhaps best
understood by using their linear meaning as a guide, so we uniformly use the
notation of linear logic.  Also, for programming convenience, we generalize the
usual binary and nullary disjunction ($A \plus B$ and $\zero$) and conjunction
($A \with B$ and $\top$) by using labeled disjunction ${\plus}\{\ell : A_m^\ell\}_{\ell \in L}$
and conjunction ${\with}\{\ell : A_m^\ell\}_{\ell \in L}$.  From the linear logical
perspective, these are \emph{internal} and \emph{external} choice, respectively;
from the programming perspective they are \emph{sums} and \emph{products}.
We write $P_m$ for atomic propositions of mode $m$.
\[
  \begin{array}{llcll}
    \mbox{Propositions} & A_m, B_m & ::=
    & P_m \mid A_m \lolli B_m \mid {\with}\{\ell : A_m^\ell\}_{\ell \in L}
      \mid \up^m_k A_k & \mbox{(negative)} \\
                        & &
    & \null \mid A_m \tensor B_m \mid \one_m \mid {\plus}\{\ell : A_m^\ell\}_{\ell \in L}
      \mid \down^n_m A_n & \mbox{(positive)} \\[1ex]
    \mbox{Contexts} & \Gamma & ::= & \cdot \mid \Gamma, x:A_m & \mbox{(unordered)}
  \end{array}
\]
Each mode $m$ comes with a set $\sigma(m) \subseteq \{\mathsf{W},\mathsf{C}\}$ of \emph{structural
  properties}, where $\mathsf{W}$ stands for weakening and $\mathsf{C}$ stands for contraction.
We further have a preorder $m \geq r$ that specifies that a proof of the
succedent $C_r$ may depend on an antecedent $A_m$.  This is enforced using the
presupposition that in a sequent $\Gamma \vdash C_r$, every antecedent $A_m$ in
$\Gamma$ must satisfy $m \geq r$, written as $\Gamma \geq r$.  We have the
additional stipulation of \emph{monotonicity}, namely that $m \geq k$ implies
$\sigma(m) \supseteq \sigma(k)$.  This is required for cut elimination to hold.
Furthermore, we presuppose that in $\up^m_k A_k$ we have $m \geq k$ and for
$\down^n_m A_n$ we have $n \geq m$.  Also, contexts may not have any repeated
variables and we will implicitly apply variable renaming to maintain this
presupposition.  Finally, we abbreviate $\cdot, x : A$ as just $x : A$.

In preparation for natural deduction, instead of explicit rules of weakening and
contraction (see \citep{Pruiksma18un} for such a system) we have a context merge
operation $\Gamma_1 \join \Gamma_2$.  Since, as usual in the sequent calculus,
we read the rules bottom-up, it actually describes a nondeterministic split of
the context that is pervasive in the presentations of linear logic \citep{Andreoli92jlc}.
\[
  \begin{array}{rclcll}
    (\Gamma_1, x:A_m) & \join & (\Gamma_2, x:A_m) & = & (\Gamma_1 \join \Gamma_2), x : A_m
    & \mbox{provided $\mathsf{C} \in \sigma(m)$} \\
    (\Gamma_1, x:A_m) & \join & \Gamma_2 & = & (\Gamma_1 \join \Gamma_2), x : A_m 
    & \mbox{provided $x \not\in \rm{dom}(\Gamma_2)$} \\
    \Gamma_1 & \join & (\Gamma_2, x:A_m) & = & (\Gamma_1 \join \Gamma_2), x : A_m
    & \mbox{provided $x \not\in \rm{dom}(\Gamma_1)$} \\
    (\cdot) & \join & \Gamma_2 & = & \Gamma_2 \\
    \Gamma_1 & \join & (\cdot) & = & \Gamma_1
  \end{array}
\]
Note that the context merge is a partial operation, which prevents, for example,
the use of an antecedent without contraction in both premises of the
${\tensor}R$ rule.

The complete set of rules can be found in \autoref{fig:seqi}.
In the rules, we write $\Gamma_{\mathsf{W}}$ for a context in which weakening can be applied
to every antecedent, that is, $\mathsf{W} \in \sigma(m)$ for every antecedent $x : A_m$.
Also, as is often the case in presentations of the sequent calculus, we omit explicit
variable names that tag antecedents. We only discuss the rules for $\down^n_m A_n$
because they illustrate the combined reasoning about structural properties and modes.

\begin{figure}[ht!]
  \begin{rules}
    \infer[\ms{id}]
    {\Gamma_\mathsf{W} \join A_m \vdash A_m}
    {}
    \hspace{3em}
    \infer[\ms{cut}]
    {\Gamma \join \Gamma' \vdash C_r}
    {\Gamma \geq m \geq r
      & \Gamma \vdash A_m
      & \Gamma', A_m \vdash C_r}
    \\[1em]\hline\\[1ex]
    \infer[{\lolli}R]
    {\Gamma \vdash A_m \lolli B_m}
    {\Gamma, A_m \vdash B_m}
    \\[1em]
    \infer[{\lolli}L]
    {\Gamma \join \Gamma' \join A_m \lolli B_m \vdash C_r}
    {\Gamma \geq m
      & \Gamma \vdash A_m
      & \Gamma', B_m \vdash C_r}
    \\[1em]
    \infer[{\with}R]
    {\Gamma \vdash {\with}\{\ell : A_m^\ell\}_{\ell \in L}}
    {\Gamma \vdash A_m^\ell \quad (\forall \ell \in L)}
    \hspace{3em}
    \infer[{\with}L]
    {\Gamma \join {\with}\{\ell : A_m^\ell\}_{\ell \in L} \vdash C_r}
    {\Gamma, A_m^\ell \vdash C_r\quad (\ell \in L)}
    \\[1em]
    \infer[{\tensor}R]
    {\Gamma \join \Gamma' \vdash A_m \tensor B_m}
    {\Gamma \vdash A_m
      & \Gamma' \vdash B_m}
    \hspace{3em}
    \infer[{\tensor}L]
    {\Gamma \join A_m \tensor B_m \vdash C_r}
    {\Gamma, A_m, B_m \vdash C_r}
    \\[1em]
    \infer[{\one}R]
    {\Gamma_\mathsf{W} \vdash \one_m}
    {}
    \hspace{3em}
    \infer[{\one}L]
    {\Gamma \join \one_m \vdash C_r}
    {\Gamma \vdash C_r}
    \\[1em]
    \infer[{\plus}R]
    {\Gamma \vdash {\plus}\{\ell : A_m^\ell\}_{\ell \in L}}
    {\Gamma \vdash A^\ell_m \quad (\ell \in L)}
    \hspace{3em}
    \infer[{\plus}L]
    {\Gamma \join {\plus}\{\ell : A_m^\ell\}_{\ell \in L} \vdash C_r}
    {\Gamma, A_m^\ell \vdash C_r \quad (\forall \ell \in L)}
    \\[1em]\hline\\[1ex]
    \infer[{\up}R]
    {\Gamma \vdash \up_k^m A_k}
    {\Gamma \vdash A_k}
    \hspace{3em}
    \infer[{\up}L]
    {\Gamma \join \up_k^m A_k \vdash C_r}
    {k \geq r &
      \Gamma, A_k \vdash C_r}
    \\[1em]
    \infer[{\down}R]
    {\Gamma_{\mathsf{W}} \join \Gamma' \vdash \down^n_m A_n}
    {\Gamma' \geq n & \Gamma' \vdash A_n}
    \hspace{3em}
    \infer[{\down}L]
    {\Gamma \join \down^n_m A_n \vdash C_r}
    {\Gamma, A_n \vdash C_r}
  \end{rules}
  \caption{Implicit Adjoint Sequent Calculus}
  \label{fig:seqi}
\end{figure}

First, the ${\down}R$ rule.
\begin{rules}
  \infer[{\down}R]
  {\Gamma_{\mathsf{W}} \join \Gamma' \vdash \down^n_m A_n}
  {\Gamma' \geq n & \Gamma' \vdash A_n}
\end{rules}
Because we presuppose the conclusion is well-formed, we know
$\Gamma_{\mathsf{W}} \semi \Gamma' \geq m$ since $\down^n_m A_n$ has mode $m$.  Again, by
presupposition $n \geq m$ and we have to explicitly check that $\Gamma' \geq n$
because it doesn't follow from knowing that $\Gamma_{\mathsf{W}} \semi \Gamma' \geq m$.
There may be some antecedents $A_k$ in the conclusion such that $k \not\geq n$.
If the mode $k$ admits weakening, we can sort them into $\Gamma_{\mathsf{W}}$.  If it does
not, then the rule is simply not applicable.

On to the ${\down}L$ rule:
\begin{rules}
  \infer[{\down}L]
  {\Gamma \join \down^n_m A_n \vdash C_r}
  {\Gamma, A_n \vdash C_r}
\end{rules}
By presupposition on the conclusion, we know $\Gamma \semi \down^n_m A_n \geq r$
which means that $\Gamma \geq r$ and $m \geq r$.  Since $n \geq m$ we have
$n \geq r$ by transitivity, so $\Gamma, A_n \geq r$ and we do not need any
explicit check.  The formulation of the antecedents in the conclusion
$\Gamma \semi \down^n_m A_n$ means that if mode $m$ admits contraction, then the
antecedent $\down^n_m A_n$ may also occur in $\Gamma$, that is, it may be
preserved by the rule.  If $m$ does not admit contraction, this occurrence of
$\down^n_m A_n$ is not carried over to the premise.

This implicit sequent calculus satisfies the expected theorems, due to
\citet{Reed09un,Pruiksma18un} and, most closely reflecting the precise form of
our formulation, \citet{Pruiksma24phd}.  They follow standard patterns,
modulated by the substructural properties and the preorder on modes.

\begin{theorem}[Admissibility of Weakening and Contraction]
  The following are admissible:
  \begin{rules}
    \infer-[\ms{weaken}]
    {\Gamma_{\mathsf{W}} \join \Gamma \vdash A_m}
    {\Gamma_{\mathsf{W}} \geq m & \Gamma \vdash A_m}
    \hspace{3em}
    \infer-[\ms{contract}]
    {\Gamma, A_m \vdash C_r}
    {\mathsf{C} \in \sigma(m) & \Gamma, A_m, A_m \vdash C_r}
  \end{rules}
\end{theorem}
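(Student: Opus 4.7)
The plan is to prove both admissibility statements by structural induction on the given derivation, using the standard pattern for substructural sequent calculi but paying careful attention to the preorder on modes and to how the context-merge operation $\join$ interacts with $\mathsf{W}$ and $\mathsf{C}$.

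For weakening, assume $\Gamma_{\mathsf{W}} \geq m$ and a derivation $\mathcal{D}$ of $\Gamma \vdash A_m$. I construct a derivation of $\Gamma_{\mathsf{W}} \join \Gamma \vdash A_m$ by recursion on $\mathcal{D}$, renaming to keep all variables distinct. The zero-premise rules ($\ms{id}$ and ${\one}R$) already have a weakenable prefix in the conclusion, so I simply extend it by $\Gamma_{\mathsf{W}}$; the result is still fully weakenable. Single-premise rules (${\lolli}R$, ${\with}R$, ${\tensor}L$, ${\plus}L$, ${\one}L$, ${\up}R$, ${\up}L$, ${\down}L$, ${\with}L$, ${\plus}R$) recurse on the premise and reapply the rule, with side-conditions like $k \geq r$ preserved verbatim. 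For the context-splitting rules ($\ms{cut}$, ${\lolli}L$, ${\tensor}R$, ${\down}R$) I route $\Gamma_{\mathsf{W}}$ into exactly one premise---namely a premise whose succedent has the overall conclusion mode $r$---so that the needed condition $\Gamma_{\mathsf{W}} \geq r$ follows from $\Gamma_{\mathsf{W}} \geq m \geq r$ by transitivity; associativity of $\join$ then reassembles the target context.

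For contraction, assume $\mathsf{C} \in \sigma(m)$ and a derivation $\mathcal{D}$ of $\Gamma, x{:}A_m, y{:}A_m \vdash C_r$, and build a derivation of $\Gamma, z{:}A_m \vdash C_r$ by recursion on $\mathcal{D}$. Zero-premise and single-premise rules are routine: for $\ms{id}$, either the principal formula is the target $A_m$ (one of $x, y$ is principal, the other lies in the weakenable prefix, which forces $\mathsf{W} \in \sigma(m)$ and is automatic from the existence of $\mathcal{D}$), or else the principal formula lies in $\Gamma$ and both $x, y$ sit in the weakenable prefix. In a context-splitting rule the two hypotheses may land (a) both on one side, (b) one on each side, or (c) both on both sides---the last being available exactly because $\mathsf{C} \in \sigma(m)$. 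In cases (a) and (c) I invoke the induction hypothesis on the affected premise(s); in case (b) I $\alpha$-rename $x$ and $y$ to a common $z$ in their respective premises (legal since each premise contains only one occurrence) and reassemble using the first clause of the $\join$ definition, which is precisely what permits a single $z{:}A_m$ to reappear on both sides when $\mathsf{C} \in \sigma(m)$.

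The principal obstacle is case (b) of contraction: the reassembly is only well-defined because of the contraction clause in the definition of $\join$. Every other subcase reduces either to a direct appeal to the induction hypothesis followed by re-application of the same rule, or to verifying a mode side-condition that follows from the presuppositions together with monotonicity of $\sigma$ along $\geq$ and transitivity of the preorder.
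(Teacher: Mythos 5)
Your overall strategy---structural induction on the given derivation, with the context-splitting rules as the interesting cases---is the standard one; the paper itself defers the proof to the cited prior work, which follows exactly this pattern, and your treatment of contraction (in particular case (b), where the reassembly hinges on the contraction clause of $\join$) is correct. However, there is a genuine misstep in the weakening half: you group ${\down}R$ with the context-splitting rules and claim to ``route $\Gamma_{\mathsf{W}}$ into a premise whose succedent has the overall conclusion mode.'' For ${\down}R$ no such premise exists. The conclusion $\Gamma'_{\mathsf{W}} \join \Gamma' \vdash \down^n_m A_n$ has mode $m$, but its sole premise $\Gamma' \vdash A_n$ has succedent mode $n$ with $n \geq m$, and the hypothesis $\Gamma_{\mathsf{W}} \geq m$ does \emph{not} yield $\Gamma_{\mathsf{W}} \geq n$ (the preorder runs the wrong way), so the induction hypothesis is not applicable to that premise and the step as described fails.

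The repair is local but uses a different idea than the one you state: ${\down}R$ should be handled like $\ms{id}$ and ${\one}R$, by absorbing the new context into the rule's own built-in weakenable slot, concluding $(\Gamma_{\mathsf{W}} \join \Gamma'_{\mathsf{W}}) \join \Gamma' \vdash \down^n_m A_n$ without touching the premise at all. This is in fact the entire reason the rule carries a $\Gamma_{\mathsf{W}}$ prefix. A secondary, more cosmetic point: ${\with}R$ and ${\plus}L$ are multi-premise but context-sharing, so listing them among ``single-premise rules'' is harmless only because $\Gamma_{\mathsf{W}}$ must be routed into \emph{every} premise there, not ``the'' premise; and in the contraction argument your trichotomy (a)/(b)/(c) should be read as covering mixed occurrences (e.g., $x$ duplicated into both premises while $y$ goes to one) by combining the renaming and induction-hypothesis moves per premise.
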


\begin{theorem}[Admissibility of Cut and Identity]
  \label{thm:cut-id}
  \mbox{}
  \begin{enumerate}[(i)]
  \item In the system without cut, cut is admissible.
  \item In the system with identity restricted to atoms $P_m$, the
    general identity is admissible.
  \end{enumerate}
\end{theorem}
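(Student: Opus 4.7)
For part (i), my plan is the standard nested induction on triples $(A_m, \DD_1, \DD_2)$, where $\DD_1 :: \Gamma \vdash A_m$ and $\DD_2 :: \Gamma', A_m \vdash C_r$, ordered so that an induction-hypothesis appeal either strictly shrinks the cut formula $A_m$ or, keeping $A_m$ fixed, strictly shrinks one of the two subderivations. The cases split into three familiar families. First, the \emph{identity cases}: if $\DD_2$ is $\ms{id}$ on $A_m$, then $C_r = A_m$ and the desired derivation is obtained from $\DD_1$ by admissibility of weakening to absorb the residual $\Gamma_\mathsf{W}$ of the identity rule, and symmetrically if $\DD_1$ is identity. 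Second, the \emph{principal cases}: if the last rule of $\DD_1$ introduces $A_m$ on the right and the last rule of $\DD_2$ introduces the same $A_m$ on the left, I reduce to one or two cuts on strict subformulas, permitted by the outer induction. Third, the \emph{commutative cases}: if $A_m$ is not principal in the last rule of $\DD_1$ (or of $\DD_2$), I permute the cut past that rule, obtaining a smaller instance of cut on the same formula, permitted by the inner induction.

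The main obstacle will be the bookkeeping for context merges and the mode preorder. Each commutative case that pushes a cut past a context-splitting rule (${\lolli}L$, ${\tensor}R$, ${\down}R$, and so on) forces a re-splitting of the ambient context so that the conclusion still has the form $\Gamma \join \Gamma' \vdash C_r$, and every recursive cut must re-verify the side condition $\Gamma \geq m \geq r$. The modal principal cases illustrate this most sharply. For $\up_k^m A_k$, $\DD_1$ ends in ${\up}R$ with premise $\Gamma \vdash A_k$ and $\DD_2$ ends in ${\up}L$ with premise $\Gamma', A_k \vdash C_r$ under the side condition $k \geq r$, so the smaller cut on $A_k$ is well-formed precisely because of that side condition. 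For $\down^n_m A_n$, the mode chain $\Gamma \geq n \geq r$ demanded by the smaller cut on $A_n$ follows by combining the presupposition $n \geq m$ of the formula with $m \geq r$ from the original cut, using transitivity of the preorder; monotonicity of $\sigma$ ensures that any structural permissions needed downstream also survive.

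For part (ii), the plan is a straightforward structural induction on $A_m$, constructing a derivation of $A_m \vdash A_m$ using the right and left rules for the topmost connective together with induction hypotheses on subformulas. For $A_m \lolli B_m$ I apply ${\lolli}R$ above ${\lolli}L$ above the IH derivations for $A_m$ and $B_m$; the cases for $\with$, $\plus$, $\tensor$, and $\one$ are analogous, using label-indexed iteration for the former two. For $\up_k^m A_k$ I sandwich the IH on $A_k$ between ${\up}R$ and ${\up}L$, with the side condition $k \geq r = k$ satisfied by reflexivity of the preorder; $\down^n_m A_n$ is dual, with the ${\down}R$ side condition met by reflexivity and $\Gamma_\mathsf{W}$ taken to be empty. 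The only recurring subtlety is confirming the presupposition $\Gamma \geq m$ for the singleton context consisting of $A_m$, which is likewise immediate from reflexivity.
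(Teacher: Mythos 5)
The paper does not prove this theorem itself: it defers to \citet{Reed09un,Pruiksma18un} and \citet{Pruiksma24phd}, remarking only that the proofs ``follow standard patterns, modulated by the substructural properties and the preorder on modes.'' Your plan is exactly that standard pattern, and the mode bookkeeping you single out is the right bookkeeping: for $\up^m_k A_k$ the explicit premise $k \geq r$ of ${\up}L$ is what legitimizes the smaller cut on $A_k$, and for $\down^n_m A_n$ the chain $n \geq m \geq r$ together with $\Gamma' \geq n$ from ${\down}R$ does the same. Your identity-expansion argument for (ii) is also correct, including the observation that the reflexivity of $\geq$ discharges every side condition for singleton contexts.

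One point where your sketch is thinner than a complete proof would need to be: in this implicit formulation the left rules have conclusions of the form $\Gamma \join A_m \vdash C_r$, so when $\mathsf{C} \in \sigma(m)$ the cut formula may persist into $\Gamma$ and hence into the premise. The principal case then does not immediately reduce to cuts on strict subformulas; you must first appeal to the induction hypothesis with the \emph{same} cut formula against the (smaller) premise of $\DD_2$ to discharge the contracted copy, and only then perform the principal reduction. Your stated measure --- same formula, one subderivation strictly smaller --- does license this, but it needs to be phrased lexicographically ($\DD_2$ shrinks while $\DD_1$ is held fixed and reused), and the whole argument leans on the prior admissibility of weakening and contraction (e.g.\ to absorb the $\Gamma_{\mathsf{W}}$ residues of $\ms{id}$, ${\one}R$, and ${\down}R$), which you use only implicitly. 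With those points made explicit, the proposal is a faithful reconstruction of the proof the paper outsources.
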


We call a proof \emph{cut-free} if it does not contain cut and \emph{long} if
the identity is restricted to atomic propositions $P$.  It is an immediate
consequence of \autoref{thm:cut-id} that every derivable sequent has a long
cut-free proof.  The subformula property of cut-free proofs directly implies
that a cut-free proof of a sequent $\Gamma_m \vdash A_m$ where all subformulas
are of mode $m$ is directly a proof in the logic captured by the mode $m$.
Moreover, an arbitrary proof can be transformed into one of this form by cut
elimination.  These strong \emph{conservative extension} properties are a
hallmark of adjoint logic.

Since our main interest lies in natural deduction, we consider only three
examples.
\begin{example}[G3]
  We obtain the standard sequent calculus G3 \citep{Kleene52} for intutionistic
  logic with a single mode $\mU$.  All side conditions are automatically
  satisfied since $\mU \geq \mU$.
\end{example}

\begin{example}[LNL and DILL]
  By specializing the rules to two modes, $\mU$ and $\mL$ with the order
  $\mU > \mL$, we obtain a minor variant of LNL in its \emph{parsimonious
    presentation} \citep{Benton94tr}.  Our notation is $F X = \down^\mU_\mL X$
  and $G A = \up_\mL^\mU A$. Significant here is that we do not just model
  provability, but the exact structure of proofs except that our structural
  rules remain implicit.

  We obtain the sequent calculus formulation of dual intuitioninstic linear
  logic (DILL) \citep{Barber96,Chang03tr} by restricting the formulas of mode
  $\mU$ so that they only contain $\up_\mL^\mU A_\mL$.  In this version we have
  ${!}A = \down^\mU_\mL\, \up_\mL^\mU A$.  Again, the rules of dual
  intuitionistic linear logic are modeled precisely.
\end{example}

\begin{example}[Intuitionistic Subexponential Linear Logic]
  Subexponential linear logic \citep{Nigam09ppdp,Nigam16jlc} also uses a
  preorder of modes, each of which permits specific structural rules.  We obtain
  a formulation of \emph{intuitionistic subexponential linear logic} by adding a
  new distinguished mode $\mL$ with $m \geq \mL$ for all given subexponential
  modes $m$, retaining all the other relations.  We further restrict all modes
  $m$ except for $\mL$ to contain only $\up_\mL^m A_\mL$, forcing all logical
  inferences to take place at mode $\mL$.

  Compared to \citet{Chaudhuri10csl} our system does not contain ${?}A$ and is
  not focused; compared to \citet{Kanovich17arxiv}, our base logic is linear
  rather than ordered.  Also, all of our structural rules are implicit.
\end{example}

\section{Adjoint Natural Deduction}
\label{sec:nd}

Substructural \emph{sequent calculi} have recently found interesting
computational interpretations
\citep{Caires10concur,Wadler12icfp,Caires16mscs,Pfenning23coordination,Pruiksma22jfp},
including adjoint logic \citep{Pruiksma21jlamp}.  In this paper, we look instead
at \emph{functional} interpretations, which are most closely related to
\emph{natural deduction}.  Some guide is provided by natural deduction systems
for \emph{linear logic} (see, for example,
\citet{Abramsky93,Benton93tlca,Troelstra95apal}), but already they are not
entirely straightforward.  For example, some of these calculi do not satisfy
subject reduction. The interplay between modes and substructural properties
creates some further complications.  The closest blueprint to follow is probably
Benton's \citeyearpar[Figure 8]{Benton94tr}, but his system does not exhibit the
full generality of adjoint logic and is also not quite ``parsimonious'' in the
sense of the LNL sequent calculus.

In the interest of economy, we present the calculus with proof terms and two
bidirectional typing judgments, $\Delta \vdash e \chk A_m$ (expression $e$
checks against $A_m$) and $\Delta \vdash s \syn A_m$ (expression $s$
\emph{synthesizes} $A_m$).  The syntax for expressions can be found in
\autoref{fig:exps}, the rules in \autoref{fig:ndi}.  The bidirectional nature
will allow us to establish a precise relationship to the sequent calculus
(\autoref{sec:seqnd}), but it does not immediately yield a type checking
algorithm since the context merge operation is highly nondeterministic when used
to split contexts.  An algorithmic system can be found in \autoref{sec:algo}.

We obtain the vanilla typing judgment by replacing both checking and synthesis
judgments with $\Delta \vdash e : A$, dropping the rules
${\Rightarrow}/{\Leftarrow}$ and ${\Leftarrow}/{\Rightarrow}$, and removing the
syntactic form $(e : A_m)$.  We further obtain a pure natural deduction system
by removing the proof terms, although uses of the hypothesis rule then need to
be annotated with variables in order to avoid any ambiguities.

\begin{figure}[ht!]
\[
  \begin{array}{llcll}
    \mbox{Checkable Exps.} & e & ::= & \lambda x.\, e & (\lolli) \\
                           & & \mid & \{\ell \Rightarrow e_\ell\}_{\ell \in L} & (\with) \\
                           & & \mid & \mb{susp}\; e & (\up) \\[1ex]
                           & & \mid & (e_1, e_2) & (\tensor) \\
                           & & \mid & \mb{match}\; s\; ((x_1, x_2) \Rightarrow e') \\[1ex]
                           & & \mid & (\,) & (\one) \\
                           & & \mid & \mb{match}\; s\; ((\,) \Rightarrow e') \\[1ex]
                           & & \mid & \ell(e) & (\plus) \\
                           & & \mid & \mb{match}\; s\; (\ell(x) \Rightarrow e_\ell)_{\ell \in L} \\[1ex]
                           & & \mid & \mb{down}\; e & (\down) \\
                           & & \mid & \mb{match}\; s\; (\mb{down}\; x \Rightarrow e') \\[1ex]
                           & & \mid & s \\[1em]
    \mbox{Synthesizable Exps.} & s & ::= & x \\
                           & & \mid & s\; e & (\lolli) \\
                           & & \mid & s.\ell & (\with) \\
                           & & \mid & \mb{force}\; s & (\up) \\
                           & & \mid & (e : A_m) 
  \end{array}
\]
\caption{Expressions for Bidirectional Natural Deduction}
\label{fig:exps}
\end{figure}

\begin{figure}[ht!]
\begin{rules}
  \infer[{\Rightarrow}/{\Leftarrow}]
  {\Delta \vdash s \chk A_m}
  {\Delta \vdash s \syn A_m 
    }
  \hspace{3em}
  \infer[{\Leftarrow}/{\Rightarrow}]
  {\Delta \vdash (e : A_m) \syn A_m}
  {\Delta \vdash e \chk A_m}
  \\[1em]
  \infer[\ms{hyp}]{\Delta_{\mathsf{W}} \join x : A_m \vdash x \syn A_m}{}
  \\[1em]\hline\\[1ex]
  \infer[{\lolli}I]
  {\Delta \vdash \lambda x.\, e \chk A_m \lolli B_m}
  {\Delta, x:A_m \vdash e \chk B_m}
  \\[1em]
  \infer[{\lolli}E]
  {\Delta \join \Delta' \vdash s\, e \syn B_m}
  {\Delta \vdash s \syn A_m \lolli B_m
    & \Delta' \vdash e \chk A_m}
  \\[1em]
  \infer[{\with}I]
  {\Delta \vdash \{\ell \Rightarrow e_\ell\}_{\ell \in L} \chk {\with}\{\ell : A^\ell_m\}_{\ell \in L}}
  {\Delta \vdash e_\ell \chk A^\ell_m\quad (\forall \ell \in L)}
  \hspace{3em}
  \infer[{\with}E]
  {\Delta \vdash s.\ell \syn A_m^\ell}
  {\Delta \vdash s \syn {\with}\{\ell : A^\ell_m\}_{\ell \in L}
    \quad (\ell \in L)}
  \\[1em]
  \infer[{\up}I]
  {\Delta \vdash \mb{susp}\; e \chk \up_k^m A_k}
  {\Delta \vdash e \chk A_k}
  \hspace{3em}
  \infer[{\up}E]
  {\Delta_{\mathsf{W}} \join \Delta' \vdash \mb{force}\; s \syn A_k}
  {\Delta' \geq m & \Delta' \vdash s \syn \up_k^m A_k}
  \\[1em]\hline\\[1ex]
  \infer[{\tensor}I]
  {\Delta \join \Delta' \vdash (e_1,e_2) \chk A_m \tensor B_m}
  {\Delta \vdash e_1 \chk A_m
    & \Delta' \vdash e_2 \chk B_m}
  \\[1em]
  \infer[{\tensor}E]
  {\Delta \join \Delta' \vdash \mb{match}\; s\; ((x_1,x_2) \Rightarrow e') \chk C_r}
  {\Delta \vdash s \syn A_m \tensor B_m
    & \Delta \geq m \geq r
    & \Delta', x_1 : A_m, x_2 : B_m \vdash e' \chk C_r}
  \\[1em]
  \infer[{\one}I]
  {\Delta_{\mathsf{W}} \vdash (\,) \chk \one_m}
  {}
  \hspace{3em}
  \infer[{\one}E]
  {\Delta \join \Delta' \vdash \mb{match}\; s\; ((\,) \Rightarrow e') \chk C_r}
  {\Delta \vdash s \syn \one_m
    & \Delta \geq m \geq r
    & \Delta' \vdash e' \chk C_r}
  \\[1em]
  \infer[{\plus}I]
  {\Delta \vdash \ell(e) \chk {\plus}\{\ell : A_m^\ell\}_{\ell \in L}}
  {\Delta \vdash e \chk A_m^\ell}
  \\[1em]
  \infer[{\plus}E]
  {\Delta \join \Delta' \vdash \mb{match}\; s\; (\ell(x) \Rightarrow e_\ell)_{\ell \in L}
    \chk C_r}
  {\Delta \vdash s \syn {\plus}\{\ell : A_m^\ell\}_{\ell \in L}
    & \Delta \geq m \geq r
    & \Delta', x : A_m^\ell \vdash e_\ell \chk C_r\quad (\forall \ell \in L)}
  \\[1em]
  \hspace*{-1ex}
  \infer[{\down}I]
  {\Delta_{\mathsf{W}} \join \Delta' \vdash \mb{down}\; e \chk \down^n_m A_n}
  {\Delta' \geq n & \Delta' \vdash e \chk A_n}
  \hspace{1em}
  \infer[{\down}E]
  {\Delta \join \Delta'\vdash \mb{match}\; s\; (\mb{down}\; x \Rightarrow e') \chk C_r}
  {\Delta \vdash s \syn \down^n_m A_n
    & \Delta \geq m \geq r
    & \Delta', x : A_n \vdash e' \chk C_r}
\end{rules}
  \caption{Implicit Bidirectional Natural Deduction}
  \label{fig:ndi}
\end{figure}

The rules maintain a few important invariants, particularly
\emph{independence}:
\begin{enumerate}[(i)]
\item $\Delta \vdash e \chk A_m$ presupposes $\Delta \geq m$
\item $\Delta \vdash s \syn A_m$ presupposes $\Delta \geq m$
\end{enumerate}
This is somewhat surprising because we think of the synthesis judgment
$s \syn A_m$ as proceeding top-down rather than bottom-up.  Indeed, there are
other choices with dependence and structural properties being checked in
different places.  We picked this particular form because we want general typing
$e : A_m$ to arise from collapsing the checking/synthesis distinction.  This
means that the two rules ${\Rightarrow}/{\Leftarrow}$ and
${\Leftarrow}/{\Rightarrow}$ should have no conditions because those would
disappear.  The algorithmic system in \autoref{sec:algo} checks the
conditions in different places.

As an example of interesting rules we revisit $\down^n_m A_n$ (where $n \geq m$
is presupposed).  The introduction rule of natural deduction mirrors the
right rule of the sequent calculus, which is the case throughout.
\begin{rules}
  \infer[{\down}R]
  {\Gamma_{\mathsf{W}} \join \Gamma' \vdash \down^n_m A_n}
  {\Gamma' \geq n & \Gamma' \vdash A_n}
  \hspace{3em}
  \infer[{\down}I]
  {\Delta_{\mathsf{W}} \join \Delta' \vdash \mb{down}\; e \chk \down^n_m A_n}
  {\Delta' \geq n & \Delta' \vdash e \chk A_n}
\end{rules}
As is typical for these translations, the elimination rules turns the left rule
``upside down'' because (like all rules in natural deduction) the principal
formula is on the right-hand side of judgment, not the left as in the sequent
calculus.  This means we now have some conditions to check.
\begin{rules}
  \infer[{\down}L]
  {\Gamma \join \down^n_m A_n \vdash C_r}
  {\Gamma, A_n \vdash C_r}
  \hspace{3em}
  \infer[{\down}E]
  {\Delta \join \Delta'\vdash \mb{match}\; s\; (\mb{down}\; x \Rightarrow e') \chk C_r}
  {\Delta \vdash s \syn \down^n_m A_n
    & \Delta \geq m \geq r
    & \Delta', x : A_n \vdash e' \chk C_r}
\end{rules}
$\Delta \geq m$ is needed to enforce independence on the first premise.
$m \geq r$ together with $n \geq m$ enforces independence on the second premise.
Similar restrictions appear in the other elimination rules for the positive
connectives ($\tensor$, $\one$, $\plus$).

In general we see the following patterns in the correctness proofs below:
\begin{itemize}
\item The identity corresponds to ${\Rightarrow}/{\Leftarrow}$
\item Cut corresponds to ${\Leftarrow}/{\Rightarrow}$
\item Right rules correspond to introduction rules
\item Left rules correspond to upside-down elimination rules
  \begin{itemize}
  \item For negative connectives ($\lolli$, $\with$, $\up$) they
    are just reversed
  \item For positive connectives ($\tensor$, $\one$, $\plus$, $\down$)
    in addition a new hypothesis is introduced in a second premise
  \end{itemize}
\end{itemize}
From the last point we see that the hypothesis $x : A_m$ should just be read as
$x \syn A_m$.

We often say a natural deduction is \emph{normal}, which means that it cannot be
reduced, but under which collection of reductions?  The difficulty here is that
rewrite rules that reduce an introduction of a connective immediately followed
by its elimination are not sufficient to achieve deductions that are
\emph{analytic} in the sense that they satisfy the subformula property.  To
obtain analytic deductions, we have to add \emph{permuting conversions}.

We follow a different approach by directly characterizing \emph{verifications}
\citep{Dummett91,MartinLof83}, which are proofs that can be seen as constructed
by applying introduction rules bottom-up and elimination rules top-down.  By
definition, verifications satisfy the subformula property and are therefore
analytic and a suitable ``normal form'' even without defining a set of
reductions.

How does this play out here?  It turns out that if $\Delta \vdash e \chk A_m$
then the corresponding proof of $A_m$ (obtained by erasure of expressions) is a
\emph{verification} if the ${\Leftarrow}/{\Rightarrow}$ rule is disallowed and the
${\Rightarrow}/{\Leftarrow}$ rule is restricted to atomic propositions $P$.  By
our above correspondence that corresponds precisely to a cut-free sequent
calculus proof where the identity is restricted to atomic propositions.
Proof-theoretically, the meaning of a proposition is determined by its
\emph{verifications}, which, by definition, only decompose the given proposition
into its components.  Compare this with general proofs that do not obey such a
restriction.

In the next section we will prove that every proposition that has a proof also
has a verification by relating the sequent calculus and natural deduction.
\begin{example}[Church's ${\lambda}I$ calculus]
  \citet[Chapter {II}]{Church41book} introduced the ${\lambda}I$ calculus in
  which each bound variable requires at least one occurrence.  We obtain the
  simply-typed ${\lambda}I$ calculus with one mode $\mS$ with
  $\sigma(\mS) = \{\mathsf{C}\}$ and using $A_\mS \lolli B_\mS$ as the only type
  constructor.

  Similarly, we obtain the simply-typed $\lambda$-calculus with a single mode
  $\mU$ with $\sigma(\mU) = \{\mathsf{W},\mathsf{C}\}$ and the simply-typed \emph{linear}
  $\lambda$-calculus with a single mode $\mL$ with $\sigma(\mL) = \{\,\}$, using
  $A_\mL \lolli B_\mL$ as the only type constructor.
\end{example}

\begin{example}[Intuitionistic Natural Deduction]
  We obtain (structural) intutionistic natural deduction with a single mode
  $\mU$ with $\sigma(\mU) = \{\mathsf{W},\mathsf{C}\}$, where we can define
  $A \lor B = {\oplus}\{\mb{inl} : A, \mb{inr} : B\}$ and
  $\bot = {\oplus}\{\, \}$, $A \land B = {\with}\{\pi_1 : A, \pi_2 : B\}$ and
  $\top = {\with}\{\,\}$ and $A \to B = A \lolli B$.
\end{example}

\begin{example}[Intuitionistic S4]
  \label{ex:s4}
  We obtain the fragment of intuitionistic S4 in its dual formulation
  \citep{Pfenning01mscs} without possibility ($\Diamond A$) with two modes $\mV$
  and $\mU$ with $\mV > \mU$ and $\sigma(\mV) = \sigma(\mU) = \{\mathsf{W},\mathsf{C}\}$.  As in
  the DILL example of the adjoint sequent calculus, the mode $\mV$ is inhabited only by types $\up_\mU^\mV A_\mU$
  and we define $\Box A_\mU = \down^\mV_\mU \up_\mU^\mV A_\mU$, which is a
  \emph{comonad}.  The judgment $\Delta \semi \Gamma \vdash C\; \jtrue$ with
  valid hypotheses $\Delta$ and true hypothesis $\Gamma$ is modeled by
  $\Delta_\mV, \Gamma_\mU \vdash C_\mU$.

  The structure of \emph{verifications} is modeled almost exactly with one small
  exception: we allow the form $\Delta_\mV \vdash C_\mV$.  Because any
  proposition $B_\mV = \up_\mU^\mV A_\mU$, there is only one applicable rule to
  construct a verification of this judgment: ${\up}I$ (which, not
  coincidentally, is invertible).
\end{example}

\begin{example}[Lax Logic]
  We obtain natural deduction for lax logic \citep{Benton98,Pfenning01mscs} with
  two modes, $\mU$ and $\mX$, with $\mU > \mX$ and
  $\sigma(\mU) = \sigma(\mX) = \{\mathsf{W},\mathsf{C}\}$.  The mode $\mX$ is inhabited only by
  $\down^\mU_\mX A_\mU$.  We define
  $\bigcirc A_\mU = \up^\mU_\mX \down^\mU_\mX A_\mU$, which is a strong monad
  \citep{Benton98}.

  We model the rules of \citet{Pfenning01mscs} exactly, except that we allow
  hypotheses $B_\mX$, which must have the form $\down^\mU_\mX A_\mU$.  We can
  eagerly apply ${\down}E$ to obtain $A_\mU$, which again does not lose
  completeness by the invertibility of ${\down}L$ in the sequent calculus.

  We can also obtain linear versions of these relationships following
  \citep{Benton96}, although the term calculi do not match up exactly.
\end{example}

\section{Relating Sequent Calculus and Natural Deduction}
\label{sec:seqnd}

Rather than trying to find a complete set of proof reductions for natural
deduction, we translate a proof to the sequent calculus, apply cut and identity
elimination, and then translate the resulting proof back to natural deduction.
This is not essential, but it simultaneously proves the soundness and
completeness of natural deduction for adjoint logic and the completeness of
verifications.  This allows us to focus on the computational interpretation in
\autoref{sec:dynamics} that is a form of substructural functional programming.

For completeness of natural deduction, one might expect to prove that
$\Gamma \vdash C$ in the sequent calculus implies $\Gamma \vdash e \chk C$ in
natural deduction.  While this holds, a direct proof would not generate a
verification from a cut-free proof.  Intuitively, the way the proof proceeds
instead is to take a sequent $x_1 : A_1, \ldots, x_n : A_n \vdash C$ (ignoring
modes for the moment) and annotate each antecedent with a synthesizing term and
the succedent with an expression
$s_1 \syn A_1, \ldots, s_n \syn A_n \vdash e \chk C$.  This means we have to
account for the variables in $s_i$, and we do this with a substitution $\theta$
assigning synthesizing terms to each antecedent in $\Gamma$.  We therefore
define substitutions as mapping from variables to synthesizing terms.
\[
  \begin{array}{llcll}
    \mbox{Substitutions} & \theta & ::= & \cdot \mid \theta, x \mapsto s
  \end{array}
\]
We type substitutions with the judgment $\Delta \vdash \theta \syn \Gamma$,
where $\Delta$ contains the free variables in $\theta$.  This judgment
must respect independence and the structural properties of each
antecedent in $\Gamma$, as defined by the following rules:
\begin{rules}
  \infer[]
  {\cdot \vdash (\cdot) \ssyn (\cdot)}
  {}
  \hspace{3em}
  \infer[]
  {\Delta \join \Delta' \vdash (\theta, x \mapsto s) \syn (\Gamma, x : A_m)}
  {\Delta \vdash \theta \ssyn \Gamma
    & \Delta' \geq m
    & \Delta' \vdash s \syn A_m}
\end{rules}
We will use silently that if $\Delta \vdash \theta \ssyn \Gamma$ and $\Gamma \geq m$
then $\Delta \geq m$.

We write $e(x)$ and $s'(x)$ for terms with (possibly multiple, possibly no)
occurrences of $x$ and $e(s)$ and $s'(s)$ for the result of substituting $s$ for
$x$, respectively.  Because variables $x : A$ synthesize their types $x \syn A$,
the following admissible rules are straightforward assuming the premises satisfy
our presuppositions.

\begin{theorem}[Substitution Property]
  \label{thm:subst}
  \mbox{}
  The following are admissible:
  \begin{rules}
    \infer-[\ms{subst}/{\Leftarrow}]
    {\Delta \join \Delta' \vdash e(s) \chk C_{r}}
    {\Delta \vdash s \syn A_m
      & \Delta', x : A_m \vdash e(x) \chk C_{r}}
    \\[1em]
    \infer-[\ms{subst}/{\Rightarrow}]
    {\Delta \join \Delta' \vdash s'(s) \syn B_{k}}
    {\Delta \vdash s \syn A_m
      & \Delta', x : A_m \vdash s'(x) \syn B_{k}}
  \end{rules}
\end{theorem}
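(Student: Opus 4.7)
The plan is to prove (i) and (ii) by simultaneous structural induction on the derivations $\Delta', x:A_m \vdash e(x) \chk C_r$ and $\Delta', x:A_m \vdash s'(x) \syn B_k$. The two halves of the induction communicate only through the coercions ${\Rightarrow}/{\Leftarrow}$ and ${\Leftarrow}/{\Rightarrow}$ and the ascription form $(e : A_m)$, which pass from one IH to the other with no extra work.

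The genuine base case is the hypothesis rule on the synthesis side. If $s'(x) = x$, the derivation has the shape $\Delta'_{\mathsf{W}} \join x{:}A_m \vdash x \syn A_m$, so $B_k = A_m$ and $\Delta' = \Delta'_{\mathsf{W}}$; the goal $\Delta \join \Delta'_{\mathsf{W}} \vdash s \syn A_m$ then follows from $\Delta \vdash s \syn A_m$ by admissibility of weakening, using $\Delta'_{\mathsf{W}} \geq m$ from the original presupposition. If $s'(x) = y$ with $y \neq x$, then $x$ must lie in the weakenable part, forcing $\mathsf{W} \in \sigma(m)$; by monotonicity applied to $\Delta \geq m$, every antecedent of $\Delta$ also admits weakening, so $\Delta$ is absorbed into the weakenable part of a fresh hypothesis rule deriving $\Delta \join \Delta' \vdash y \syn B_k$.

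The inductive step divides into two patterns. Rules whose conclusion does not split the context ($\lolli I$, $\with I$, $\with E$, $\up I$, $\plus I$, and the coercions) are routine: apply the IH to the premise and reapply the rule. The interesting cases are rules whose conclusion is formed by a merge $\Delta_1 \join \Delta_2 = \Delta', x{:}A_m$, namely $\lolli E$, $\tensor I$, $\tensor E$, $\one I$, $\one E$, $\plus E$, $\down I$, $\down E$, and $\up E$. For each I case-split on where $x$ resides in the merge. If $x$ appears in exactly one branch, I apply the IH there, leave the other alone, and reassemble the rule over the context $\Delta \join \Delta'$. If $x$ appears in both branches, the merge operation forces $\mathsf{C} \in \sigma(m)$; applying the IH to both branches yields two copies of $\Delta$, which I collapse using admissibility of contraction. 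The licence for contracting $\Delta$ comes from monotonicity once more: $\Delta \geq m$ together with $\mathsf{C} \in \sigma(m)$ implies $\mathsf{C} \in \sigma(k)$ for every mode $k$ occurring in $\Delta$.

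The main obstacle I expect is maintaining the independence and structural side conditions $\Delta_i \geq m' \geq r$ attached to the positive eliminations (and the $\Delta' \geq n$ premise of $\down I$) after substitution rearranges contexts. These inequalities survive because $\Delta \geq m$ is given, the mode order is transitive, and the presuppositions already established on the original derivation carry through unchanged; but verifying the alignment rule-by-rule is where the substructural bookkeeping becomes heaviest, and it is the principal way in which this substitution theorem is more delicate than its counterpart for ordinary structural natural deduction.
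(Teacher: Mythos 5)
Your proposal is correct and follows essentially the same route as the paper: a simultaneous rule induction on the second given derivation, invoking monotonicity of the mode preorder to justify contracting (or weakening) the substituted context $\Delta$ when $x$ occurs in both (or neither) branch of a context merge. The paper's proof is only a brief sketch of exactly this argument, so your more detailed case analysis of the hypothesis rule and the merge-splitting eliminations is a faithful elaboration rather than a different proof.
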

\begin{proof}
  By a straightforward simultaneous rule induction on the second given
  derivation.  In some cases we need to apply monotonicity.  For example, if $m$
  admits contraction and $\Delta \geq m$, then each hypothesis in $\Delta$ must
  also admit contraction.
\end{proof}

\begin{lemma}[Substitution Split]
  If $\Delta \vdash \theta \syn (\Gamma; \Gamma')$ then there exists $\theta_1$
  and $\theta_2$ and $\Delta_1$ and $\Delta_2$ such that
  $\Delta = \Delta_1 ; \Delta_2$ and $\Delta_1 \vdash \theta_1 \syn \Gamma$ and
  $\Delta_2 \vdash \theta_2 \syn \Gamma'$.
\end{lemma}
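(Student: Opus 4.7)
The plan is to proceed by induction on the typing derivation $\Delta \vdash \theta \syn (\Gamma \join \Gamma')$, which corresponds to induction on the length of $\theta$ (equivalently, on the size of $\Gamma \join \Gamma'$). The base case is immediate: if $\Gamma \join \Gamma' = \cdot$, then both $\Gamma$ and $\Gamma'$ are empty, forcing $\theta = \cdot$ and $\Delta = \cdot$, and we take $\theta_1 = \theta_2 = \cdot$ with $\Delta_1 = \Delta_2 = \cdot$.

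For the inductive step, the derivation ends by typing a substitution of the form $(\theta_0, x \mapsto s)$ against some extended context $(\Gamma_0^\star, x : A_m)$, where $\Delta = \Delta_0 \join \Delta'$, $\Delta_0 \vdash \theta_0 \syn \Gamma_0^\star$, $\Delta' \geq m$, and $\Delta' \vdash s \syn A_m$. I would then case-analyze which clause of the context-merge definition produced the last antecedent $x : A_m$ of $\Gamma \join \Gamma'$. In the two cases where $x$ appears in exactly one of $\Gamma$, $\Gamma'$ (say $\Gamma = \Gamma_0, x : A_m$ while $x \not\in \mathrm{dom}(\Gamma')$), I apply the inductive hypothesis to $\Delta_0 \vdash \theta_0 \syn (\Gamma_0 \join \Gamma')$, obtaining $\theta_1^0, \theta_2, \Delta_1^0, \Delta_2$, and take $\theta_1 = (\theta_1^0, x \mapsto s)$ and $\Delta_1 = \Delta_1^0 \join \Delta'$; the typing of $\theta_1$ follows from the substitution-typing rule applied to the inductive $\theta_1^0$ and the original $s$. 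The symmetric case is analogous.

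The third case is where $x : A_m$ appears in both $\Gamma$ and $\Gamma'$, which by the definition of $\join$ requires $\mathsf{C} \in \sigma(m)$. Here the binding $x \mapsto s$ must be duplicated into both $\theta_1$ and $\theta_2$. I apply induction on the strictly smaller $\Delta_0 \vdash \theta_0 \syn (\Gamma_0 \join \Gamma'_0)$ to obtain $\theta_1^0, \theta_2^0, \Delta_1^0, \Delta_2^0$ with $\Delta_0 = \Delta_1^0 \join \Delta_2^0$, and set $\theta_1 = (\theta_1^0, x \mapsto s)$, $\theta_2 = (\theta_2^0, x \mapsto s)$, $\Delta_1 = \Delta_1^0 \join \Delta'$, and $\Delta_2 = \Delta_2^0 \join \Delta'$.

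The main obstacle is verifying $\Delta_1 \join \Delta_2 = \Delta$ in this duplication case: one needs $(\Delta_1^0 \join \Delta') \join (\Delta_2^0 \join \Delta') = (\Delta_1^0 \join \Delta_2^0) \join \Delta' = \Delta_0 \join \Delta'$, which reduces to the identity $\Delta' \join \Delta' = \Delta'$. This is exactly where the monotonicity assumption on modes is used: because $\Delta' \geq m$ and $\mathsf{C} \in \sigma(m)$, monotonicity gives $\mathsf{C} \in \sigma(k)$ for every antecedent of mode $k$ in $\Delta'$, so the contraction clause of $\join$ applies pointwise and $\Delta'$ merges with itself. The required independence side conditions for the new substitution typings are inherited directly from those on the original derivation, so no further checks are needed.
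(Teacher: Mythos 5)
Your proof is correct and takes essentially the same approach as the paper's: induction on the substitution typing derivation with a case split on which clause of the context merge produced the last antecedent, extending the appropriate $\theta_i$ with $x \mapsto s$, and invoking monotonicity to get $\Delta' \join \Delta' = \Delta'$ in the contraction case. Your explicit identification of that idempotence step as the crux matches the paper's key line exactly.
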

\begin{proof}
  By case analysis on the definition of context merge operation and induction on
  $\Delta \vdash \theta \syn (\Gamma; \Gamma')$. We rely on associativity and
  commutativity of context merge. We show two cases.

  \begin{description}
  \item[Case:] $(\Gamma_1, x:A_m) \join (\Gamma_2, x:A_m)
    = (\Gamma_1 \join \Gamma_2), x:A_m$  and $\mathsf{C} \in \sigma(m)$ 
    \[
      \infer
      {\Delta \join \Delta' \vdash (\theta_{12}, x \mapsto s) \syn (\Gamma_1\join\Gamma_2), x : A_m}
      {\Delta \vdash \theta_{12} \ssyn \Gamma_1\join\Gamma_2
        & \Delta' \geq k
        & \Delta' \vdash s \syn A_m}
    \]
    \begin{tabbing}
      $\Delta_1 \vdash \theta_1 \syn \Gamma_1$ and \\
      $\Delta_2 \vdash \theta_2 \syn \Gamma_2$ and \\
      $\Delta = \Delta_1 \join \Delta_2$ \` by IH
      \\
      $\Delta_1 \join \Delta' \vdash \theta_1, x \mapsto s \syn \Gamma_1, x : A_m$ \`
      by rule \\
      $\Delta_2 \join \Delta' \vdash \theta_2, x \mapsto s \syn \Gamma_2, x : A_m$ \`
      by rule \\
      since $\mathsf{C}\in \sigma(m)$ 
      and $\Delta' \geq m$, we have $\mathsf{C} \in \sigma(k)$ for any $B_k \in
      \Delta'$ \` by monotonicity
      \\
      $(\Delta_1 \join \Delta') \join (\Delta_2 \join \Delta') 
      = (\Delta_1 \join \Delta_2) \join \Delta' 
      = \Delta \join \Delta'$
      \` by previous line
    \end{tabbing}

  \item[Case:] $\Gamma_1 \join (\Gamma_2, x:A_m) = (\Gamma_1
    \join \Gamma_2), x:A_m$  and $x \not\in \rm{dom}(\Gamma_1)$ 
    \[
      \infer
      {\Delta \join \Delta' \vdash (\theta_{12}, x \mapsto s) \syn ((\Gamma_1\join\Gamma_2), x \mapsto A_m)}
      {\Delta \vdash \theta_{12} \ssyn \Gamma_1\join\Gamma_2
        & \Delta' \geq k
        & \Delta' \vdash s \syn A_m}
    \]

    \begin{tabbing}
      $\Delta_1 \vdash \theta_1 \syn \Gamma_1$ and \\
      $\Delta_2 \vdash \theta_2 \syn \Gamma_2$ and \\
      $\Delta = \Delta_1 \join \Delta_2$ \` by IH
      \\
      $\Delta_2 \join \Delta' \vdash \theta_2, x \mapsto s \syn \Gamma_2, x \mapsto A_m$ \`
      by rule 
      \\
      $\Delta_1 \join (\Delta_2 \join \Delta') 
      = (\Delta_1 \join \Delta_2) \join \Delta' 
      = \Delta \join \Delta'$
      \` by associativity of context merge
    \end{tabbing}
  \end{description}
\end{proof}

Now we have the pieces in place to prove the translation
from the sequent calculus to natural deduction.

\begin{theorem}[From Sequent Calculus to Natural Deduction]
  \label{thm:seq2nd}
  \mbox{}\newline
  If $\Gamma \vdash A_r$ and $\Delta \vdash \theta \ssyn \Gamma$ then
  $\Delta \vdash e \chk A_r$ for some $e$.
\end{theorem}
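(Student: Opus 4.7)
The plan is to proceed by rule induction on the given sequent calculus derivation $\Gamma \vdash A_r$, generalizing over both the natural deduction context $\Delta$ and the substitution $\theta$. In each case, I will use Substitution Split to distribute $\theta$ across whatever split of $\Gamma$ appears in the conclusion of the sequent rule, apply the induction hypothesis to each premise with the appropriate sub-substitution, and then assemble the resulting checking or synthesis terms using the natural deduction rule that corresponds to the sequent rule via the dictionary laid out in \autoref{sec:nd}: right rules become introductions, negative left rules drive a synthesizing coercion of the principal hypothesis, and positive left rules produce $\mb{match}$ expressions.

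For the two ``structural'' cases, $\ms{id}$ and $\ms{cut}$, the translation runs through the bidirectional coercion rules. For $\ms{id}$ the substitution necessarily has the form $\theta_\mathsf{W}, x \mapsto s$ where $\Delta' \vdash s \syn A_m$, and I simply apply $\Rightarrow\!/\!\Leftarrow$ to return $s$ as a checking term; weakening on the $\Delta_\mathsf{W}$ side comes for free since Substitution Split together with monotonicity ensures that $\theta_\mathsf{W}$ is typed in a context all of whose hypotheses admit weakening. For $\ms{cut}$, after splitting $\theta$ according to $\Gamma \join \Gamma'$ the IH gives a checking term $e_1 \chk A_m$ for the first premise; I then form the annotated expression $(e_1 : A_m)$, which synthesizes $A_m$ via $\Leftarrow\!/\!\Rightarrow$, and extend $\theta_2$ with the mapping $x \mapsto (e_1 : A_m)$ before invoking the IH on the second premise.

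For the right rules I simply extend $\theta$ with mappings $x \mapsto x$ for each newly bound hypothesis (this extension is well-typed by the $\ms{hyp}$ rule), apply the IH, and wrap with the corresponding introduction ($\lambda x.\,e$, pairing, labels, $\mb{susp}$, $\mb{down}$, and so on). Rules such as ${\tensor}R$, ${\one}R$, and ${\down}R$ that partition the context need the Substitution Split lemma again, and the $\Gamma_\mathsf{W}$ side in ${\one}R$ and ${\down}R$ is handled once more by monotonicity. For negative left rules ($\lolli L$, $\with L$, $\up L$) the principal hypothesis in $\Gamma$ is sent by $\theta$ to some synthesizing term $s$; I eliminate on $s$ using $\lolli E$, $\with E$, or $\up E$ to produce a synthesizing term for the subformula that appears in the sequent premise, and extend the residual substitution with that term before appealing to the IH. For positive left rules ($\tensor L$, $\one L$, $\plus L$, $\down L$), the corresponding elimination is a $\mb{match}$ whose branch binds the components locally, so the recursion uses an extension of the substitution with identity mappings $x_i \mapsto x_i$ on the newly bound variables, exactly as in the right-rule cases.

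The main obstacle, and the reason the statement is phrased via a substitution rather than a direct context translation, lies in carrying the substructural and preorder side conditions of the natural deduction rules through the induction. In the positive left cases I have to verify $\Delta_s \geq m \geq r$ for the eliminating term, which requires combining the presupposition that $\Delta_s \vdash s \syn A_m\tensor B_m$ (or the analogous form) enforces $\Delta_s \geq m$ with the presupposition $\Gamma \geq r$ inherited from the sequent to get $m \geq r$. In the rules that explicitly call out a weakening subcontext ($\one R$, $\down R$, $\mb{hyp}$), I need to show that splitting $\theta$ across $\Gamma_\mathsf{W} \join \Gamma'$ yields a natural deduction context of the form $\Delta_\mathsf{W} \join \Delta'$, which is precisely where monotonicity of $\sigma$ over the preorder is used; the rest of the argument then reduces to bookkeeping on associativity and commutativity of context merge, already used in the proof of Substitution Split.
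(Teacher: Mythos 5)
Your proposal is correct and follows essentially the same route as the paper: rule induction on the sequent derivation while threading a substitution of synthesizing terms, using Substitution Split for context partitions, inversion on the substitution to recover the term for the principal hypothesis of each left rule, and monotonicity to discharge the $\Gamma_\mathsf{W}$ sides. The only (harmless) divergence is the cut case, where you insert $(e_1 : A_m)$ directly into the substitution before invoking the IH, while the paper extends with $x \mapsto x$ and then appeals to the Substitution Property (\autoref{thm:subst}); both are valid, and yours slightly streamlines that case, though you should still spell out, as the paper does for ${\up}L$ and ${\down}L$, the subcase where contraction lets the principal formula also occur in $\Gamma$, so that $\Delta \join \Delta_2 = \Delta$ must be argued via monotonicity.
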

\begin{proof}
  By rule induction on the derivation $\DD$ of $\Gamma \vdash A_r$ and
  applications of inversion on the definition of substitution.  We present
  several indicative cases.  In this proof we write out the variables
  labeling the antecedents in sequents to avoid ambiguities.

  \begin{description}
  \item[Case:] $\DD$ ends in the identity.
    \[
      \DD =\quad  \infer[\ms{id}]{\Gamma_{\mathsf{W}} \join x : A_m \vdash A_m}{}
    \]
    \begin{tabbing}
      $\Delta \vdash \theta \ssyn (\Gamma_{\mathsf{W}} \join x : A_m)$ \` Given \\
      $\theta = (\theta_{\mathsf{W}}, x \mapsto s)$ \` By inversion \\
      $\Delta = (\Delta_{\mathsf{W}} \semi \Delta')$ with
      $\Delta_W \vdash \theta_{\mathsf{W}} \ssyn \Gamma_{\mathsf{W}}$ and
      $\Delta' \vdash s \syn A$ \` By context split \\
      $\Delta_{\mathsf{W}}$ satisfies weakening \` By monotonicity \\
      $\Delta' \vdash s \chk A$ \` By rule ${\Rightarrow}/{\Leftarrow}$ \\
      $\Delta_{\mathsf{W}} \semi \Delta' \vdash s \chk A$ \` By weakening \\
      $\Delta \vdash s \chk A$ \` Since $\Delta = (\Delta_{\mathsf{W}} \semi \Delta')$
    \end{tabbing}

  \item[Case:] $\DD$ ends in cut.
    \[
      \DD = \quad \infer[\ms{cut}]
      {\Gamma_1 \join \Gamma_2 \vdash C_r}
      {\Gamma_1 \geq m \geq r
        & \deduce[\DD_1]{\Gamma_1 \vdash A_m}{}
        & \deduce[\DD_2]{\Gamma_2, x : A_m \vdash C_r}{}}
    \]
    \begin{tabbing}
      $\Delta \vdash \theta \ssyn (\Gamma_1 \semi \Gamma_2)$ \` Given \\
      $\Delta = (\Delta_1 \join \Delta_2)$, $\theta = (\theta_1, \theta_2)$
      with $\Delta_1 \vdash \theta_1 \ssyn \Gamma_1$
      and $\Delta_2 \vdash \theta_2 \ssyn \Gamma_2$ \` By context split \\
      $\Delta_1 \vdash e_1 \chk A_m$ \` By IH on $\DD_1$ \\
      $\Delta_1 \vdash (e_1 : A_m) \syn A_m$ \` by rule ${\Leftarrow}/{\Rightarrow}$ \\
      $\Delta_2, x : A_m \vdash (\theta_2, x \mapsto x) \ssyn (\Gamma_2, x : A_m)$
      \` By subst. rule \\
      $\Delta_2, x : A_m \vdash e_2(x) \chk C_r$ \` By IH on $\DD_2$ \\
      $\Delta_1 \semi \Delta_2 \vdash e_2(e_1:A_m) \chk C_r$ \` By substitution (\autoref{thm:subst})
    \end{tabbing}

  \item[Case:] $\DD$ ends in $\up L$.
    \[ \DD = \quad
      \infer[{\up}L]
      {\Gamma \join x : \up_k^m A_k \vdash C_r}
      {\deduce[]{k \geq r}{} & \deduce[\DD']{\Gamma, y : A_k \vdash C_r}{}}
    \]
    We consider two subcases: $x : \up_k^m A_k \in \Gamma$ and $x : \up_k^m
    A_k \not\in \Gamma$. 
    \begin{description}
    \item[Subcase:] $x : \up_k^m A_k \in \Gamma$ and so
      $\mathsf{C} \in \sigma(m)$. Hence, $\Gamma \join x : \up_k^m A_k = \Gamma = \Gamma', x : \up_k^m A_k$.
      \begin{tabbing}
        $\Gamma \join x : \up^m_k A_k = \Gamma \geq r $ \` by presupposition \\
        $m \geq r$ \` since $(x : \up_k^m A_k)_m \in \Gamma$ \\
        $\Delta \vdash \theta \syn \Gamma $ \` by assumption\\ 
        $\Delta_1 \vdash \theta' \syn \Gamma'$ and 
        $\Delta_2 \vdash s \syn \up_k^m A_k$ and $\Delta_2 \geq m$ \` by
        inversion on subst. rule \\ 
        \` where $\Delta = \Delta_1 \join \Delta_2$ and $\theta
        = (\theta', x \mapsto s)$\\
        $\Delta_2 \vdash \mb{force}\; s \syn A_k$ \` by rule ${\up}E$ \\
        $\Delta \join \Delta_2 \vdash (\theta, y \mapsto \mb{force}\; s) \syn (\Gamma, y : A_k)$
        \` by subst. rule \\
        $\Delta \join \Delta_2 \vdash e \chk C_r$ for some $e$ \` by IH on $\DD'$ \\ 
        $\Delta \join \Delta_2 = \Delta$ \`  since $\mathsf{C}\in \sigma(m)$ 
        and $\Delta_2 \geq m$, we have $\mathsf{C} \in \sigma(j)$ for any $B_j \in
        \Delta_2$ \\
        $\Delta \vdash e \chk C_r$ \` by previous line 
      \end{tabbing}
    \item[Subcase:] $x : \up_k^m A_k \notin \Gamma$. 
      \begin{tabbing}
        $\Gamma, x : \up_k^m A_k \geq r $ \` by presupposition \\
        $\Gamma \geq r$ and $m \geq r$ \` by previous line \\
        $\Delta \vdash \theta \syn (\Gamma, x : \up_k^m A_k)$ \` by assumption\\ 
        $\Delta_1 \vdash \theta' \syn \Gamma$ and 
        $\Delta_2 \vdash s \syn \up_k^m A_k$ and $\Delta_2 \geq m$ \` by
        inversion on subst. rule \\ 
        \` where $\Delta = \Delta_1\join\Delta_2$ and $\theta
        = (\theta', x \mapsto s)$\\
        $\Delta_2 \vdash \mb{force}\; s \syn A_k$ \` by rule ${\up}E$ \\
        $\Delta_1 \join \Delta_2 \vdash (\theta', x \mapsto \mb{force}\; s) \syn (\Gamma, y : A_k)$
        \` by subst. rule \\
        $\Delta_1 \join \Delta_2 \vdash e \chk C_r$ \` by IH on $\DD'$ \\ 
        $\Delta \vdash e \chk C_r$ \` since $\Delta = (\Delta_1 \join \Delta_2)$
      \end{tabbing}
    \end{description}

  \item[Case:] $\DD$ ends in $\down L$.  Similarly to the previous case, we need
    to distinguish whether $x : \down^k_m A_k$ appears in $\Gamma$.
    \[ \DD = \quad
      \infer[{\down}L]
      {\Gamma \join x : \down^k_m A_k \vdash C_r}
      {\deduce[\DD']{\Gamma, y : A_k \vdash C_r}{}}
    \]
    We only show the case where $x : \down^k_m A_k$ is in $\Gamma$; the other
    follows similar reasoning.
    \begin{description}
    \item[Subcase:] $x : \down^k_m A_k \in \Gamma$ and so
      $\mathsf{C} \in \sigma(m)$. Hence, $\Gamma \join x : \down^k_m A_k = \Gamma = \Gamma', x : \down^k_m A_k$.
      \begin{tabbing}
        $\Gamma \geq r $ \` by presupposition \\
        $k \geq m \geq r$ \` by presuppositions \\
        $\Delta \vdash \theta \syn \Gamma \join x : \down^k_m A_k$ \` by assumption \\
        $\Delta_1 \vdash \theta' \syn \Gamma'$ and 
        $\Delta_2 \vdash s \syn \down^k_m A_k$ and $\Delta_2 \geq m$ \` by
        inversion on subst. rule \\ 
        \mbox{$\qquad$}\` where $\Delta = \Delta_1\join\Delta_2$ and $\theta
        = (\theta', x \mapsto s)$\\
        $y{:}A_k \vdash x \syn A_k$ \` by hyp \\
        $\Delta, y{:}A_k \vdash (\theta, x \mapsto x) \syn (\Gamma, y : A_k)$ \` by
        subst. rule (using $\Delta\join y{:}A_k = \Delta, y{:}A_k$)\\
        $\Delta,y{:}A_k \vdash e' \chk C_r$ for some $e'$ \` by IH on $\DD'$ \\
        $\Delta_2 \geq m \geq r$ \` by previous lines \\
        $\Delta_2 \join \Delta \vdash \mb{match}\; s\; (\mb{down}\; y \Rightarrow e') \chk C_r$
        \` by rule ${\down}E$ \\
        $\Delta \join \Delta_2 = \Delta$ \`  since $\mathsf{C}\in \sigma(m)$ 
        and $\Delta_2 \geq m$, we have $\mathsf{C} \in \sigma(j)$ for any $B_j \in
        \Delta_2$ \\
        $\Delta \vdash \mb{match}\; s\; (\mb{down}\; y \Rightarrow e') \chk C_r$ \` by previous line
      \end{tabbing}
    \end{description}

  \end{description}
\end{proof}

While there are no substitutions involved, the other direction has to take care
to introduce a cut \emph{only} for uses of the ${\Leftarrow}/{\Rightarrow}$
rule, and identity \emph{only} for uses of the ${\Rightarrow}/{\Leftarrow}$
rule.  This requires a generalization of the induction hypothesis so that the
elimination rules can be turned ``upside down''.

\begin{theorem}[From Natural Deduction to Sequent Calculus]
  \label{thm:nd2seq}
  \mbox{}
  \begin{enumerate}[(i)]
  \item If $\Delta \vdash e \chk C_r$ then $\Delta \vdash C_r$
  \item If $\Delta \vdash s \syn A_m$
    and $\Delta', x : A_m \vdash C_r$ 
    and $\Delta \geq r$ 
    then $\Delta \join \Delta' \vdash C_r$
  \end{enumerate}
\end{theorem}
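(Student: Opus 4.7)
The plan is to prove the two statements by simultaneous induction: part (i) by induction on the derivation $\Delta \vdash e \chk C_r$, and part (ii) by induction on the derivation $\Delta \vdash s \syn A_m$. The intuition for (ii) is that $s$ corresponds to an ``upside down'' elimination spine, so the synthesizing subject $s$ is decomposed into its head hypothesis, producing a corresponding cascade of sequent left rules that culminates in the given sequent with antecedent $x : A_m$.

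First, I would handle the four rules that cross the checking/synthesis boundary. The rule ${\Rightarrow}/{\Leftarrow}$ turns into an application of IH(ii) with $\Delta' = \cdot$ and $C_r = A_m$, using the identity sequent $x : A_m \vdash A_m$ as the second hypothesis; here $\Delta \geq m$ holds by the invariant on checking. The rule ${\Leftarrow}/{\Rightarrow}$ is exactly where cut is introduced: from $\Delta \vdash e \chk A_m$ I get $\Delta \vdash A_m$ by IH(i), then combine with $\Delta', x : A_m \vdash C_r$ using $\ms{cut}$; the side condition $\Delta \geq m \geq r$ is available because $\Delta \geq m$ holds by invariant of checking and $m \geq r$ is the presupposition of the second sequent. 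The $\ms{hyp}$ rule $\Delta_\mathsf{W} \join x : A_m \vdash x \syn A_m$ dispatches part (ii) by renaming the variable in $\Delta', y : A_m \vdash C_r$ to match and then applying the admissibility of weakening by $\Delta_\mathsf{W}$ (which satisfies $\Delta_\mathsf{W} \geq r$ via monotonicity from $\Delta_\mathsf{W} \geq m \geq r$).

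For the introduction rules in (i), each case applies IH(i) to the checking premises and then mirrors the corresponding sequent right rule; context merges and side conditions such as $\Delta' \geq n$ in ${\down}I$ carry over unchanged to ${\down}R$. For the elimination rules in (ii), the general recipe is uniform: apply IH(i) to any checking subterms to obtain sequent derivations of their types; apply the corresponding sequent left rule to assemble a sequent whose principal antecedent has the type synthesized by the head; then invoke IH(ii) on the head synthesis derivation with that sequent in the ``$\Delta', x : A_m \vdash C_r$'' slot. For example, ${\lolli}E$ with $s\, e \syn B_k$ uses IH(i) on $e$ to get $\Delta_2 \vdash A_k$, then $\lolli L$ (whose side condition $\Delta_2 \geq k$ is supplied by the invariant of checking) to get $\Delta_2 \join \Delta' \join (A_k \lolli B_k) \vdash C_r$, then IH(ii) on the derivation of $s \syn A_k \lolli B_k$. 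The positive elimination rules (${\tensor}E$, ${\one}E$, ${\plus}E$, ${\down}E$) follow the same pattern but are slightly simpler, since the ``branch'' premise is a checking derivation handled by IH(i) and then the positive left rule; the extra side condition $\Delta \geq m \geq r$ is exactly what is needed both to ensure well-formedness of the new sequent and to supply the hypothesis $\Delta \geq r$ when invoking IH(ii).

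The main obstacle is the bookkeeping of the context merges together with the independence conditions. In the elimination cases the context $\Delta \join \Delta'$ must be split so that the head-synthesis derivation consumes precisely $\Delta$ and the branch consumes $\Delta'$, and one must verify that the various $\geq$ conditions presupposed or demanded by the sequent rules (notably $\Delta \geq k$ in $\lolli L$ and $\up L$, and $\Delta \geq r$ when invoking IH(ii)) are already available from the invariants on the natural deduction judgments. Once the invariants are tracked carefully, every case reduces to a one-line construction using a sequent rule plus one or two IH applications, and no new reasoning about monotonicity or structural admissibility is needed beyond what was already used to justify $\ms{hyp}$ and ${\Leftarrow}/{\Rightarrow}$.
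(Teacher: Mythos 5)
Your proposal is correct and follows essentially the same route as the paper: simultaneous rule induction on the two judgments, with ${\Leftarrow}/{\Rightarrow}$ becoming the sole source of cut, ${\Rightarrow}/{\Leftarrow}$ handled by the identity sequent plus IH(ii) with empty $\Delta'$, and each elimination handled by applying IH(i) to the checking premises, the corresponding left rule, and then IH(ii) on the head synthesis. The only cosmetic remark is that discharging $\Delta_{\mathsf{W}} \geq r$ in the $\ms{hyp}$ case needs only the given hypothesis $\Delta \geq r$ (it is transitivity of the mode preorder, not monotonicity of $\sigma$), and a similar implicit weakening by $\Delta_{\mathsf{W}}$ also appears in the ${\up}E$ case.
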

\begin{proof}
  By simultaneous rule induction on $\Delta \vdash e \chk C_r$ and
  $\Delta \vdash s \syn A_m$.   We provide four sample cases.
    \begin{description}
    \item[Case:] The derivation ends in ${\Rightarrow}/{\Leftarrow}$.
      \[ \DD = \quad
        \infer[{\Rightarrow}/{\Leftarrow}]
        {\Delta \vdash s \chk A_m}
        {\deduce[\DD']{\Delta \vdash s \syn A_m}{}}
      \]
      \begin{tabbing}
        $x : A_m \vdash A_m$ \` By identity rule \\
        $\Delta \vdash A_m$ \` By IH(ii) with $\Delta' = (\cdot)$
      \end{tabbing}
    \item[Case:] The derivation ends in ${\Leftarrow}/{\Rightarrow}$.
      \[ \DD = \quad
        \infer[{\Leftarrow}/{\Rightarrow}]
        {\Delta \vdash (e : A_m) \syn A_m}
        {\deduce[\DD']{\Delta \vdash e \chk A_m}{}}
      \]
      \begin{tabbing}
        $\Delta', x : A_m \vdash C_r$ and $\Delta \geq r$ \` Assumption \\
        $\Delta \vdash A_m$ \` By IH(i) on $\DD'$ \\
        $\Delta \join \Delta' \vdash C_m$ \` By rule of cut 
      \end{tabbing}
  \item[Case:] The derivation ends in $\up E$
    \[ \DD = \quad
      \infer[{\up}E]{\Delta_{\mathsf{W}}\join\Delta \vdash \mb{force}\ s \syn A_k}{
        \deduce{\Delta \geq m}{} & 
        \deduce{\Delta \vdash s \syn \up_k^m A_k}{\DD'}
      }
    \]
    \begin{tabbing}
      $\Delta', x : A_k \vdash C_r$ with $\Delta', x : A_k \geq r$ \` by assumption  \\
      $\Delta_{\mathsf{W}} \join \Delta \geq k$ \` by presupposition \\ 
      $k \geq r$ \` since $\Delta',A_k \geq r$ \\
      $\Delta' \join y : \up_k^m A_k \vdash C_r$ \` by $\up L$ \\
      $\Delta_{\mathsf{W}} \join \Delta \geq r$ \` using previous lines \\
      $\Delta_{\mathsf{W}} \join \Delta \join \Delta'\vdash C_r$ \` by IH(ii)
    \end{tabbing}

  \item[Case:] The derivation ends in $\down E$
    \[ \DD = \quad
      \infer[\down E]{\Delta_1;\Delta_2 \vdash \mb{match}\ s\ (\mb{down}\ x \Rightarrow e') \chk C_r}{
        \deduce[\DD_1]{\Delta_1\vdash s \syn \down^k_m A_k}{} & 
        \deduce[]{\Delta_1 \geq m \geq r}{} &
        \deduce[\DD_2]{\Delta_2,x:A_k \vdash e' \chk C_r}{}
      }
    \]
    \begin{tabbing}
      $\Delta_2, x : A_k \vdash C_r$ \` by IH(i) on $\DD_2$ \\
      $\Delta_2 \join x : \down^k_m A_k \vdash C_r$ \` by rule $\down L$ \\
      $\Delta_1 \join \Delta_2 \vdash C_r$ \` by IH(ii) on $\DD_1$
    \end{tabbing}
  \end{description}
\end{proof}

As mentioned above, verifications are the foundational equivalent of normal
forms in natural deduction.  Using the two translations above we can show that
every provable proposition has a verification.  While we have not written the
translations out as functions, they constitute the computational contents of our
constructive proof of \autoref{thm:seq2nd} and \autoref{thm:nd2seq}.

\begin{theorem}
  If $\Delta \vdash e \chk A_m$ then there exists a verification
  of $\Delta \vdash e \chk A_m$.
\end{theorem}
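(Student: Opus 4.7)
The plan is to compose the two translation theorems with cut and identity elimination in the sequent calculus. First, I would apply \autoref{thm:nd2seq}(i) to the given derivation $\Delta \vdash e \chk A_m$, yielding a sequent calculus derivation of $\Delta \vdash A_m$. Next, I would invoke \autoref{thm:cut-id} to obtain a long cut-free derivation $\DD$ of $\Delta \vdash A_m$ in which all uses of identity are restricted to atomic propositions. Finally, I would translate back using \autoref{thm:seq2nd}, which requires a substitution; I would supply the identity substitution $\theta_{\mathrm{id}}$ that maps each $x_i : A_{m_i} \in \Delta$ to the synthesizing expression $x_i$. This substitution is well-typed, i.e.\ $\Delta \vdash \theta_{\mathrm{id}} \ssyn \Delta$, by iterated application of the substitution formation rule with the $\ms{hyp}$ rule at each step (using $\Delta'_{\mathsf{W}} = \cdot$ for each leaf).

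The resulting derivation $\Delta \vdash e' \chk A_m$ is the desired verification. To justify this I need to inspect which cases of the proof of \autoref{thm:seq2nd} introduce the two rules that a verification forbids. Reading off from that proof: the rule ${\Leftarrow}/{\Rightarrow}$ appears \emph{only} in the $\ms{cut}$ case, where it wraps the translated term as $(e_1 : A_m)$ in order to switch back to synthesis, and the rule ${\Rightarrow}/{\Leftarrow}$ appears \emph{only} in the $\ms{id}$ case, where a synthesizing variable must be coerced to a checking expression. All other cases (introduction rules and ``upside-down'' elimination rules) simply build terms from their premises without inserting either coercion.

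Because $\DD$ contains no $\ms{cut}$, the translation never produces a ${\Leftarrow}/{\Rightarrow}$; and because every $\ms{id}$ in $\DD$ is at an atomic proposition $P$, every use of ${\Rightarrow}/{\Leftarrow}$ in $e'$ is at an atomic type. These are precisely the two conditions that characterize a verification in the sense of \autoref{sec:nd}.

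The only real obstacle is making the claim about the $\ms{id}$ case airtight, since the identity substitution $\theta_{\mathrm{id}}$ interacts with the hypothesis case via weakening: the identity case of \autoref{thm:seq2nd} first applies ${\Rightarrow}/{\Leftarrow}$ to a synthesizing term $s$ delivered by the substitution, and in our setting $s = x$ is just a variable. Thus the ${\Rightarrow}/{\Leftarrow}$ step occurs at type $A_m = P_m$, atomic by hypothesis. Apart from verifying this, the argument is a straightforward composition of already-established results, and no new induction is required.
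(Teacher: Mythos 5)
Your proposal is correct and follows essentially the same route as the paper's own proof: translate to the sequent calculus via \autoref{thm:nd2seq}, apply cut and identity admissibility to get a long cut-free derivation, and translate back via \autoref{thm:seq2nd}, observing that only $\ms{cut}$ produces ${\Leftarrow}/{\Rightarrow}$ and only $\ms{id}$ produces ${\Rightarrow}/{\Leftarrow}$. Your explicit construction of the identity substitution $\theta_{\mathrm{id}}$ and the check that the $\ms{id}$-case coercion lands at atomic type are details the paper leaves implicit, but they do not change the argument.
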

\begin{proof}
  Given an arbitrary deduction of $\Delta \vdash e \chk A_m$, we can use
  \autoref{thm:nd2seq} (ii) to translate it to a sequent derivation of
  $\Delta \vdash A_m$.

  By the admissibility of cut and identity (\autoref{thm:cut-id}), we can obtain
  a long cut-free proof of $\Delta \vdash A_m$.

  We observe that the translation of \autoref{thm:seq2nd} translates 
  only cut to ${\Leftarrow}/{\Rightarrow}$ and only identity 
  to ${\Rightarrow}/{\Leftarrow}$.  Using the translation back
  to natural deduction from a long cut-free proof therefore results
  in a verification.
\end{proof}

\section{Dynamics}
\label{sec:dynamics}

As mentioned in \autoref{sec:nd}, we obtain a simple typing judgment
$\Delta \vdash e : A$ by collapsing the distinction between $e \chk A$ and
$s \syn A$, using $e$ as a universal notation for all expressions.  Furthermore,
the annotation $(e : A_m)$ is removed and the rules ${\Rightarrow}/{\Leftarrow}$
and ${\Leftarrow}/{\Rightarrow}$ are also removed.  The resulting rules remain
\emph{syntax-directed} in the sense that for every form of expression there is a
unique typing rule.

We further annotate the mode-changing constructors with the mode of their
subject, which in each case is uniquely determined by the typing derivation.
Some of these annotations are necessary, because the computation rules depend on
them; other information is redundant but kept for clarity.
\begin{itemize}
\item $\mb{susp}^m_k\; e : {\up}^m_k A_k$ if $e : A_k$
\item $\mb{force}^m_k\; e : A_k$ if $e : \up_k^m A_k$
\item $\mb{down}^n_m\; e : {\down}^n_m A_n$ if $e : A_n$
\item $\mb{match}_m\; e\; M_r : C_r$ if $e : A_m$
\end{itemize}
We give a sequential call-by-value semantics similar to the K machine (e.g.,
\citep[Chapter 28]{Harper16book}), but maintaining a global environment similar
to the Milner Abstract Machine \citep{Accattoli14icfp}.  There are two forms of
state in the machine:
\begin{itemize}
\item $\eta \semi K \rgt_m e$ (evaluate $e$ of mode $m$ under continuation stack $K$ and environment $\eta$)
\item $\eta \semi K \lft_m v$ (pass value $v$ of mode $m$ to continuation stack $K$ in environment $\eta$)
\end{itemize}
In the first, $e$ is an expression to be evaluated and $K$ is a stack of
continuations that the value of $e$ is passed to for further computation.  The
second then passes this value $v$ to the continuation stack.

The global environment $\eta$ maps variables to values, but these values may
again reference other variables.  In this way it is like Launchbury's
\citeyearpar{Launchbury93popl} heap, a connection we exploit in
\autoref{sec:recursion} to model call-by-need.  Because we maintain a global
environment, we do not need to build closures, nor do we need to substitute
values for variables.  Instead, we only (implicitly) rename variables to make
them globally unique.  This form of specification allows us to isolate the
\emph{dynamic use of variables}, which means we can observe the computational
consequences of modes and their substructural nature.  We could also use the
translation to the sequent calculus and then observe the consequence with an
explicit heap \citep{Pruiksma22jfp,Pfenning23coordination}, but in this paper we
study natural deduction and functional computation more directly.

\begin{figure}[ht!]
\[
  \begin{array}{llcll}
    \mbox{Frames} & f & ::= & \uscore\; e_2 \mid v_1\; \uscore & (\lolli) \\
                               & & \mid & \uscore.\ell & (\with) \\
                               & & \mid & \mb{force}^m_k\; \uscore & (\up) \\[1ex]
                               & & \mid & (\uscore, e_2) \mid (v_1, \uscore) & (\tensor) \\
                               & & \mid & \ell(\uscore) & (\plus) \\
                               & & \mid & \mb{down}^n_m\; \uscore & (\down) \\
                               & & \mid & \mb{match}_m\; \uscore\; M_r & (\tensor, \one, \plus, \down)
    \\[1em]
    \mbox{Continuations} & K & ::= & \epsilon \mid K \cdot f
    \\[1em]
    \mbox{Environments} & \eta & ::= & \cdot \mid \eta, x \mapsto v \mid \eta, [x \mapsto v]
    \\[1em]
    \mbox{Matches} & M & ::= & (x_1,x_2) \Rightarrow e'(x_1,x_2) & (\tensor) \\
                  & & \mid & (\,) \Rightarrow e' & (\one) \\
                  & & \mid & (\ell(x) \Rightarrow e_\ell(x))_{\ell \in L} & (\plus) \\
                  & & \mid & \mb{down}\; x \Rightarrow e'(x) & (\down)
    \\[1em]
    \mbox{Values} & v & ::= & \lambda x.\, e(x) & (\lolli) \\
                               & & \mid & \{\ell \Rightarrow e_\ell\}_{\ell \in L} & (\with) \\
                               & & \mid & \mb{susp}^m_k\; e & (\up) \\[1ex]
                               & & \mid & (v_1, v_2) & (\tensor) \\
                               & & \mid & (\,) & (\one) \\
                               & & \mid & \ell(v) & (\plus) \\
                               & & \mid & \mb{down}^n_m\; v & (\down)
    \\[1em]
    \mbox{States} & S & ::= & \eta \semi K \rgt_m e \\
    & & \mid & \eta \semi K \lft_m v
  \end{array}
\]
\caption{Machine States}
\label{fig:states}
\end{figure}

The syntax for continuations, environments, values, and machine states is
summarized in \autoref{fig:states}.  Although not explicitly polarized (as in
\citet{Levy06hosc}), values of negative type ($\lolli$, $\with$, $\up$) are lazy
in the sense that they abstract over unevaluated expressions, while values of
positive types ($\tensor$, $\one$, $\plus$, $\down$) are constructed from other
values.  This will be significant in our analysis of the computational
properties of modes.  Continuation frames just reflect the left-to-right
call-by-value nature of evaluation.

Values are typed as expressions.  Frames are typed with
$\Gamma \vdash f : B_k < A_m$, which means $f$ takes a value of type $A_m$ and passes
a value of type $B_k$ further up the continuation stack.  We show sample rules for
a negative ($\lolli$) and a positive ($\plus$) type.
\begin{rules}
  \infer[]
  {\Delta \vdash \uscore\, e_2 : B_m < (A_m \lolli B_m)}
  {\Delta \vdash e_2 : A_m}
  \hspace{3em}
  \infer[]
  {\Delta \vdash v_1\, \uscore : B_m < A_m}
  {\Delta \vdash v_1 : A_m \lolli B_m} 
  \\[1em]
  \infer[]
  {\Delta_{\mathsf{W}} \vdash \ell(\uscore) : {\plus}\{\ell : A^\ell_m\}_{\ell \in L} < A^\ell_m}
  {}
  \hspace{1em}
  \infer[]
  {\Delta \vdash \mb{match}_m\; \uscore\; (\ell(x) \Rightarrow e(x)) : C_r < {\plus}\{\ell : A^\ell_m\}_{\ell \in L}}
  {\Delta, x : A^\ell_m \vdash e(x) : C_r \quad (\forall \ell \in L)}
  \\[1em]\hline\\
  \infer[]
  {\Delta_{\mathsf{W}} \vdash \epsilon : A_m < A_m}
  {}
  \hspace{3em}
  \infer[]
  {\Delta \join \Delta' \vdash K \cdot f : C_r < A_m}
  {\Delta \vdash K : C_r < B_k & \Delta' \vdash f : B_k < A_m}
\end{rules}

Regarding environments we face a fundamental choice.  One possibility is to extend the
term language of natural deduction with explicit constructs for weakening and
contraction.  Then, similar to \citet{Girard87tapsoft}, no garbage collection
would be required during evaluation since uniqueness of references to variables would be maintained. 

We pursue here an alternative that leads to slightly deeper properties.  We
leave the structural rules implicit as in the rules so far.  This means that
variables of linear mode (that is, a mode that allows neither weakening nor
contraction) have uniqueness of reference and their bindings can be deallocated
when dereferenced.  Variables of structural mode (that is, a mode that allows
both weakening and contraction) are simply persistent in the dynamics and
therefore could be subject to an explicit garbage collection algorithm.

A difficulty arises with variables that only admit contraction but not
weakening.  After they are dereferenced the first time, they may or may not be
dereferenced again.  That is, they could be implicitly weakened after the first
access.  In order to capture this we introduce a new form of typing $[x : A_m]$
and binding $[x \mapsto v]$ we call \emph{provisional}.  A provisional binding
does not need to be referenced even if $m$ does not admit weakening.  The
important new property is that an ``ordinary'' variable $y : A_k$ that does not
admit weakening can not appear in a binding $[x \mapsto v]$.  In addition, all
the usual independence requirements have to be observed.

The rules for typing expressions, continuations, etc. are extended in the obvious
way, allowing variables $[x : A_m]$ to be used or ignored (as a part of some
$\Delta_W$).  We extend the context merge operation as follows, keeping in mind
that $x : A_m$ may require an occurrence of $x$ (depending on $\sigma(m)$),
while $[x : A_m]$ does not.
\[
  \begin{array}{rclcll}
    (\Delta_1, [x:A_m] & \join & (\Delta_2, [x:A_m]) & = & (\Delta_1 \join \Delta_2), [x:A_m]
    & \mbox{provided $\mathsf{C} \in \sigma(m)$} \\
    (\Delta_1, x:A_m) & \join & (\Delta_2, [x:A_m]) & = & (\Delta_1 \join \Delta_2), x:A_m
    & \mbox{provided $\mathsf{C} \in \sigma(m)$} \\
    (\Delta_1, [x:A_m] & \join & (\Delta_2, x:A_m) & = & (\Delta_1 \join \Delta_2), x:A_m
    & \mbox{provided $\mathsf{C} \in \sigma(m)$} \\
    (\Delta_1, [x:A_m]) & \join & \Delta_2 & = & (\Delta_1 \join \Delta_2), [x : A_m]
    & \mbox{provided $x \not\in \rm{dom}(\Delta_2)$} \\
    \Delta_1 & \join & (\Delta_2, [x:A_m]) & = & (\Delta_1 \join \Delta_2), [x : A_m]
    & \mbox{provided $x \not\in \rm{dom}(\Delta_1)$}
  \end{array}
\]
We have the following typing rules for environments.  $\Delta_W$ now means that
every declaration in $\Delta$ can be weakened, either explicitly because its
mode allows weakening, or implicitly because it is provisional.
\begin{rules}
  \infer[]
  {(\cdot) : (\cdot)}
  {}
  \hspace{1em}
  \infer[]
  {(\eta, x \mapsto v) : (\Delta, x : A_m)}
  {\eta : (\Delta \join \Delta') & \Delta' \geq m & \Delta' \vdash v : A_m}
  \hspace{1em}
  \infer[]
  {(\eta, [x \mapsto v]) : (\Delta, [x : A_m])}
  {\eta : (\Delta \join \Delta'_{\mathsf{W}}) & \Delta'_{\mathsf{W}} \geq m & \Delta'_{\mathsf{W}} \vdash v : A_m}
\end{rules}%
As an example, consider
$\eta_0 = (x \mapsto (\,), y \mapsto \lambda f.\, f\, x)$ where the mode of
variables is immaterial, but let's fix them to be $\mL$ with
$\sigma(\mL) = \{\,\}$.
\[
  \infer[]
  {(x \mapsto (\,), y \mapsto \lambda f.\, f\, x) : (y : (\one_\mL \lolli A_\mL) \lolli A_\mL)}
  {\infer[]
    {(x \mapsto (\,)) : (x : \one_\mL)}
    {\infer[]{(\cdot) : (\cdot)}{}
      & \infer[]{\cdot \vdash (\,) : \one_\mL}{}}
    & \deduce[\ddd]{x : \one_\mL \vdash \lambda f.\, f\, x : (\one_\mL \lolli A_\mL) \lolli A_\mL}{}}
\]
We observe that the binding of $x \mapsto (\,)$ does not contribute a
declaration $x : \one$ to the result context due to the occurrence of $x$ in the
value of $y$.

Now consider a slightly modified version where the mode of both $x$ and $y$ is
$\mS$ with $\sigma(\mS) = \{\mathsf{C}\}$, and the binding of $y \mapsto \ldots$
becomes provisional.
This modified example is no longer well-typed.
\[
  \infer[{??}]
  {(x \mapsto (\,), [y \mapsto \lambda f.\, f\, x]) : (y : [(\one_\mS \lolli A_\mS) \lolli A_\mS)]}
  {\infer[]
    {(x \mapsto (\,)) : (x : \one_\mS)}
    {\infer[]{(\cdot) : (\cdot)}{}
      & \infer[]{\cdot \vdash (\,) : \one_\mS}{}}
    & \deduce[\ddd]{x : \one_\mS \vdash \lambda f.\, f\, x : (\one_\mS \lolli A_\mS) \lolli A_\mS}{}}
\]
The problem is at the rule application marked by ${??}$.  The variable $y$ does
not need to be used, despite its mode, because the binding is provisional.  This
means that $x$ might also not be used because its only occurrence is in the value
of $y$.  But that is not legal, since the mode of $x$ does not admit weakening
and the binding is not provisional.

We type abstract machine states with the type of their final answer, that is
$s : C_r$.
\begin{rules}
  \infer[]
  {(\eta \semi K \rgt_m e) : C_r}
  {\eta : (\Delta \join \Delta') & \Delta \vdash K : C_r < A_m
    & \Delta' \vdash e : A_m}
  \\[1em]
  \infer[]
  {(\eta \semi K \lft_m v) : C_r}
  {\eta : (\Delta \join \Delta') & \Delta \vdash K : C_r < A_m
    & \Delta' \vdash v : A_m}
\end{rules}

We now continue with the computational rules for our abstract machine.  The
rules are in \autoref{fig:computation}.  We factor out passing a value to a
match, $\eta \semi v \pss_m M = \eta' \semi e'$ that produces a (possibly
extended) environment $\eta'$ and expression $e'$.  In all cases below, we
presuppose the variable names are chosen so the extended environment has unique
bindings for each variable.  For an extension with mutual recursion, see
\autoref{sec:recursion}.

\begin{figure}[ht!]
\[
  \begin{array}{lcrclclcl}
    \eta & \semi & (\,) & \pss_m & ((\,) \Rightarrow e') & = & \eta & \semi & e' \\
    \eta & \semi & (v_1,v_2) & \pss_m & ((x_1,x_2) \Rightarrow e'(x_1,x_2)) & = & \eta, x_1 \mapsto v_1, x_2 \mapsto v_2 & \semi & e'(x_1,x_2) \\
    \eta & \semi & \ell(v) & \pss_m & (\ell(x) \Rightarrow e'_\ell(x))_{\ell \in L} & = & \eta, x \mapsto v & \semi & e'_\ell(x) \\
    \eta & \semi & \mb{down}^n_m\, v & \pss_m & (\mb{down}(x) \Rightarrow e'(x)) & = & \eta, x \mapsto v & \semi & e'(x)
  \end{array}
\]
\[
  \begin{array}{lcll}
    \eta, x \mapsto v, \eta' \semi K \rgt_m x & \longrightarrow & \eta, \eta' \semi K \lft_m v 
    & (\mathsf{C} \not\in \sigma(m)) \\
    \eta, x \mapsto v, \eta' \semi K \rgt_m x & \longrightarrow & \eta, [x \mapsto v], \eta' \semi K \lft_m v 
    & (\mathsf{C} \in \sigma(m))
    \\[1em]
    \eta \semi K \rgt_r \mb{match}_m\; e\; M_r & \longrightarrow & \eta \semi K \cdot (\mb{match}_m\; \uscore\; M_r) \rgt_m e & ({\tensor}, {\one}, {\plus}, {\down}) \\
    \eta \semi K \cdot (\mb{match}_m\; \uscore\; M_r) \lft_m v
    & \longrightarrow
    & \multicolumn{2}{l}{\eta' \semi K \rgt_r e' \quad
    \mbox{where $\eta \semi v \pss_m M_r = \eta' \semi e'$}}
    \\[1em]
    \eta \semi K \rgt_m \lambda x.\, e(x) & \longrightarrow & \eta \semi K \lft_m \lambda x.\, e(x)
    & (\lolli) \\
    \eta \semi K \rgt_m (e_1\, e_2) & \longrightarrow & \eta \semi K \cdot (\uscore\, e_2) \rgt_m e_1 \\
    \eta \semi K \cdot (\uscore\; e_2) \lft_m v_1 & \longrightarrow & \eta \semi K \cdot (v_1\; \uscore) \rgt_m e_2 \\
    \eta nn\semi K \cdot (\lambda x.\, e(x), \uscore) \lft_m v_2 & \longrightarrow & \eta, x \mapsto v \semi K \rgt_m e(x)
    & 
    \\[1em]
    \eta \semi K \rgt_m \{\ell \Rightarrow e_\ell\}_{\ell \in L} & \longrightarrow & \eta \semi K \lft_m \{\ell \Rightarrow e_\ell\}_{\ell \in L}
    & (\with) \\
    \eta \semi K \rgt_m e.\ell & \longrightarrow & \eta \semi K \cdot (\uscore.\ell) \rgt_m e \\
    \eta \semi K \cdot (\uscore.\ell) \lft_m \{\ell \Rightarrow e_\ell\}_{\ell \in L} &
    \longrightarrow & \eta \semi K \rgt_m e_\ell & (\ell \in L)
    \\[1em]
    \eta \semi K \rgt_m \mb{susp}^m_k\; e & \longrightarrow & \eta \semi K \lft_m \mb{susp}^m_k\; e
    & (\up) \\
    \eta \semi K \rgt_k \mb{force}^m_k\; e & \longrightarrow & \eta \semi K \cdot (\mb{force}^m_k\; \uscore) \rgt_m e \\
    \eta \semi K \cdot (\mb{force}^m_k\; \uscore) \lft_m \mb{susp}^m_k e & \longrightarrow & \eta \semi K \rgt_k e
    \\[1em]
    \eta \semi K \rgt_m (e_1,e_2) & \longrightarrow & \eta \semi K \cdot (\uscore, e_2) \rgt_m e_1
    & (\tensor) \\
    \eta \semi K \cdot (\uscore, e_2) \lft_m v_1 & \longrightarrow & \eta \semi K \cdot (v_1, \uscore) \rgt_m e_2 \\
    \eta \semi K \cdot (v_1, \uscore) \lft_m v_2 & \longrightarrow & \eta \semi K \lft_m (v_1, v_2)
    \\[1em]
    \eta \semi K \rgt_m (\,) & \longrightarrow & \eta \semi K \lft_m (\,)
    & (\one)
    \\[1em]
    \eta \semi K \rgt_m \ell(e) & \longrightarrow & \eta \semi K \cdot \ell(\uscore) \rgt_m e
    & (\plus) \\
    \eta \semi K \cdot \ell(\uscore) \lft_m v & \longrightarrow & \eta \semi K \lft_m \ell(v)
    \\[1em]
    \eta \semi K \rgt_m \mb{down}^n_m\; e & \longrightarrow & \eta \semi K \cdot \mb{down}^n_m\; \uscore \rgt_n e
    & (\down) \\
    \eta \semi K \cdot (\mb{down}^n_m\; \uscore) \lft_n v & \longrightarrow
    & \eta \semi K \lft_m \mb{down}^n_m\; v
  \end{array}
\]
\caption{Computation Rules}
\label{fig:computation}
\end{figure}

We obtain the following expected theorems of preservation and progress.

\begin{theorem}[Preservation]
  If $S : A$ and $S \longrightarrow S'$ then $S' : A$.
\end{theorem}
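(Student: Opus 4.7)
The plan is to proceed by case analysis on the transition rule $S \longrightarrow S'$. In each case I would invert the state-typing rules to expose the component typings of $\eta$, $K$, and the focused expression or value, and then reassemble them into a typing derivation for $S'$. The reasoning uses associativity and commutativity of context merge, monotonicity of $\sigma$ with respect to $\geq$, and the environment- and context-splitting lemmas already deployed in \autoref{sec:seqnd}, together with inversion lemmas on the frame- and continuation-typing rules.

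The bulk of the cases are routine. Transitions that move an introduction form from the $\rgt$ side to the $\lft$ side ($\lambda x.\, e$, $\{\ell \Rightarrow e_\ell\}_{\ell \in L}$, $\mb{susp}^m_k\; e$, and $(\,)$) require no manipulation of the derivation. The frame-push rules for application, projection, forcing, pairing, labeling, down-introduction, and matching invert the typing of the compound expression to split $\Delta'$ appropriately, build a freshly typed frame, and extend $K$ via the frame-composition rule. The frame-pop rules do the reverse, inverting $K = K_0 \cdot f$ and combining the frame's typing with the incoming value's typing to produce either a new expression to evaluate or a new value to pass. The transitions that introduce new bindings into $\eta$---$\lambda$-apply and the four cases of $\pss_m$---invert the frame typing to recover the body's typing with fresh variables in context, then extend $\eta$ using the environment-extension rule, relying on independence ($\Delta \geq m \geq r$) to justify the required mode comparisons.

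The delicate rule is variable lookup. In the non-contraction subcase, I would invert the environment typing to extract a witness $\Delta_v \vdash v : A_m$ with $\Delta_v \geq m$, drop the binding $x \mapsto v$ from $\eta$, and hand $\Delta_v$ to $v$ on the $\lft$-side of the new state. The main obstacle is the contraction subcase, where $x \mapsto v$ is rewritten to the provisional binding $[x \mapsto v]$: the typing rule for provisional bindings requires the context supporting $v$ to be $\mathsf{W}$-compatible (weakening-admitting modes or already provisional declarations), whereas monotonicity alone yields only $\mathsf{C} \in \sigma(k)$ for each mode $k$ in $\Delta_v$, not $\mathsf{W} \in \sigma(k)$. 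I expect to resolve this by strengthening the machine-typing invariant so that every binding $x \mapsto v$ in $\eta$ whose mode $m$ admits contraction but not weakening is supported by a context that is already $\mathsf{W}$-compatible. The burden is then to verify that every rule preserves this strengthening---in particular, $\lambda$-apply and the match-pass operation, which are the only rules that create fresh non-provisional bindings from a value typed in the ambient environment. Once this auxiliary invariant is in place, the conversion of $x \mapsto v$ to $[x \mapsto v]$ is immediate from the extended context-merge rules for provisional contexts, and preservation reduces to the routine bookkeeping assembled in the earlier cases.
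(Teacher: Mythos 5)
Your case analysis and your identification of the delicate case (variable lookup at a contraction-admitting mode) match the paper, but your proposed resolution does not work. The strengthened invariant you want --- that every non-provisional binding $x \mapsto v$ whose mode admits contraction but not weakening is supported by a $\mathsf{W}$-compatible context --- is false for reachable, well-typed states. The non-provisional environment-typing rule places no $\mathsf{W}$-compatibility requirement on the supporting context $\Delta'$, and the machine does create such bindings: for instance, at a strict mode $\mS$ with $\sigma(\mS)=\{\mathsf{C}\}$, a $\lambda$-apply step can install $z \mapsto \lambda w.\, y\, w$ where $y$ is a strict variable that has not yet been dereferenced, so the support of $z$'s binding contains the non-provisional, non-weakenable declaration $y : B_\mS$. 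This is exactly the shape of the paper's own example $(x \mapsto (\,),\ y \mapsto \lambda f.\, f\, x)$ at mode $\mS$, which is well typed precisely because the binding of $y$ is \emph{not} provisional. So the invariant you propose to verify for $\lambda$-apply and $\pss_m$ already fails there, and your fix for the lookup case collapses with it.

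The actual argument does not need the support of the binding to be $\mathsf{W}$-compatible in advance. At the moment of dereference you retype the supporting declarations of the binding as provisional ones $[y : B_k]$ (provisional declarations are $\mathsf{W}$-compatible by definition and may still be used inside $v$), which makes the typing rule for $[x \mapsto v]$ applicable. Soundness of this on-the-fly conversion rests on two facts you did not combine: by monotonicity every such $y$ satisfies $\mathsf{C} \in \sigma(k)$, and the returned value $v$ on the $\lft$ side of the new state still contains a real occurrence of each such $y$. Hence the overall state typing merges $[y:B_k]$ (from the environment) with $y:B_k$ (from the returned value) via the provisional context-merge clause guarded by $\mathsf{C} \in \sigma(k)$, so the strict variables remain genuinely accounted for even though the environment's copy has been demoted to provisional. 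The rest of your outline (frame push/pop, introduction forms moving from $\rgt$ to $\lft$, the non-contraction lookup, and the bindings created by $\pss_m$ and application) is routine and consistent with the paper's proof.
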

\begin{proof}
  By cases on $S \longrightarrow S'$, applying inversion to the typing of $S$ and
  assembling a typing derivation of $S'$ from the resulting information.

  The trickiest case involves dereferencing a variable $x \mapsto v$ admitting
  contraction.  It is sound because every variable $y$ occurring in $v$ must
  also admit contraction by monotonicity and, furthermore, such variables still
  have an occurrence in the value $v$ that is being returned.  Therefore in the
  typing of the environment we can now type $[x \mapsto v]$ with $[x : A_m]$.
\end{proof}

A machine state is \emph{final} if it has the form
$\eta \semi \epsilon \lft_m v$, that is, if a value is returned to the empty
continuation in some global environment $\eta$.  In order to prove progress, we
need to characterize values of a given type using a \emph{canonical forms}
property.  Note that we allow a context $\Delta$ to provide for the variables
that may be embedded in a value of negative type ($\lolli$, $\with$, $\up$), but
that a variable by itself does not count as a value.

\begin{theorem}[Canonical Forms]
  If $\Delta \vdash v : A_m$ then one of the following applies:
  \begin{enumerate}[(i)]
  \item if $A_m = B_m \lolli C_m$ then $v = \lambda x.\, e(x)$ for some $e$
  \item if $A_m = {\with}\{\ell : A_m^\ell\}_{\ell \in L}$
    then $v = \{\ell \Rightarrow e_\ell\}_{\ell \in L}$ for some set $e_\ell$
  \item if $A_m = \up_k^m B_k$ then $v = \mb{susp}_k^m\; e$
  \item if $A_m = B_m \tensor C_m$ then $v = (v_1, v_2)$ for values $v_1$ and $v_2$
  \item if $A_m = \one$ then $v = (\,)$
  \item if $A_m = {\plus}\{\ell : B_m^\ell\}_{\ell \in L}$
    then $v = \ell(v')$ for some $\ell \in L$ and value $v'$
  \item if $A_m = \down^n_m A_n$ then $v = \mb{down}^n_m v'$ for some value $v'$
  \end{enumerate}
\end{theorem}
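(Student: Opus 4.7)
The plan is to proceed by case analysis on the syntactic form of the value $v$, as laid out in the value grammar of \autoref{fig:states}. Values are restricted to exactly seven forms: $\lambda x.\, e$, $\{\ell \Rightarrow e_\ell\}_{\ell \in L}$, $\mb{susp}^m_k\; e$, $(v_1, v_2)$, $(\,)$, $\ell(v')$, and $\mb{down}^n_m\; v'$. Crucially, a bare variable is \emph{not} a value (there is no clause $v ::= x$), so I do not need to worry about the hypothesis rule producing a value of some arbitrary type.

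For each of the seven forms of $v$, I will invert the typing derivation $\Delta \vdash v : A_m$. Because the simple typing judgment (obtained by collapsing the checking/synthesis distinction as described at the start of \autoref{sec:dynamics}) is syntax-directed, each value form admits only one applicable typing rule, namely the corresponding introduction rule. Inverting that rule forces $A_m$ to have exactly the shape claimed:
$\lambda x.\,e$ comes from ${\lolli}I$ and so $A_m = B_m \lolli C_m$;
$\{\ell \Rightarrow e_\ell\}_{\ell \in L}$ comes from ${\with}I$ and so $A_m$ is a labeled product;
$\mb{susp}^m_k\; e$ comes from ${\up}I$, forcing $A_m = \up^m_k B_k$;
$(v_1,v_2)$ comes from ${\tensor}I$, and both sub-components are themselves values by the assumption that the whole term is a value;
$(\,)$ comes from ${\one}I$;
$\ell(v')$ comes from ${\plus}I$, with $v'$ a value;
and $\mb{down}^n_m\, v'$ comes from ${\down}I$, again with $v'$ a value.

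Since the seven value forms and the seven cases of the theorem are in bijection, this case analysis exhausts both sides and produces exactly the claimed canonical shape of $v$ from the shape of $A_m$. I expect no real obstacle here: the mode annotations and preorder conditions that complicated earlier proofs (such as the sequent-to-natural-deduction translation) play no role, because we are only extracting the top-level constructor of $v$ and the top-level connective of $A_m$, not reasoning about subterms or contexts. The only mild subtlety is to note that in the negative cases ($\lolli$, $\with$, $\up$) the subexpressions $e$, $e_\ell$ need not themselves be values, which is consistent with the fact that these values are lazy in the sense discussed just before \autoref{fig:states}.
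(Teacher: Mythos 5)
Your proof is correct; the paper in fact states the Canonical Forms theorem without any proof, and your argument---case analysis on the seven productions of the value grammar, using that a bare variable is not a value and that the collapsed typing judgment is syntax-directed so each value form inverts to exactly one introduction rule---is precisely the routine argument the authors are implicitly relying on. The one organizational point worth noting is that you analyze the form of $v$ and read off the shape of $A_m$, whereas the statement goes from the shape of $A_m$ to the form of $v$; since the seven value forms and seven type constructors are in bijection under the typing rules, your case split does establish the stated direction.
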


\begin{theorem}[Progress]
  If $S : C_r$ then
  either $S$ is final or $S \mapsto S'$ for some $S'$
\end{theorem}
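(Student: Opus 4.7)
The plan is to proceed by case analysis on the two possible forms of the machine state $S$. In each case I descend one syntactic level---on $e$ when $S = \eta \semi K \rgt_m e$, and on $K$ (then on the top frame) when $S = \eta \semi K \lft_m v$---and either exhibit a computation step or verify that $S$ is final.

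For $S = \eta \semi K \rgt_m e$, every non-variable expression form has a computation rule that fires unconditionally: the introduction forms $\lambda x.\,e$, $\{\ell \Rightarrow e_\ell\}_{\ell \in L}$, $\mb{susp}^m_k\,e$, and $(\,)$ step directly to $\lft$ with themselves as values; the constructor expressions $(e_1,e_2)$, $\ell(e)$, and $\mb{down}^n_m\,e$ push the corresponding one-hole frame; and the eliminations $e_1\,e_2$, $e.\ell$, $\mb{force}^m_k\,e$, and $\mb{match}_m\,e\,M_r$ push a frame that drives their subject. The only case that requires typing information is $e = x$: inversion on the state typing and on the \ms{hyp} rule forces $x$ into the domain of the context used to type $\eta$, and inversion on the environment-typing rules then shows that $\eta$ carries a binding for $x$, either as $x \mapsto v$ or as a provisional $[x \mapsto v]$. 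One of the two dereferencing rules then fires, the choice determined by whether $\mathsf{C} \in \sigma(m)$.

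For $S = \eta \semi K \lft_m v$, if $K = \epsilon$ then $S$ is final by definition. Otherwise $K = K' \cdot f$ and I case on $f$. The frames $(\uscore, e_2)$, $(v_1,\uscore)$, $\ell(\uscore)$, $\mb{down}^n_m\,\uscore$, and $\uscore\,e_2$ fire unconditionally, either driving the next subexpression or assembling the next value without inspecting its head. For the remaining frames $v_1\,\uscore$, $\uscore.\ell$, $\mb{force}^m_k\,\uscore$, and $\mb{match}_m\,\uscore\,M_r$, firing requires a specific head shape either on the stored value ($v_1$ must be a $\lambda$) or on the incoming $v$ (a labeled record, a $\mb{susp}$, or a positive constructor compatible with $M_r$). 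Here I invoke the Canonical Forms theorem: inversion on frame typing fixes the type of the value in question, and canonical forms then supplies exactly the shape the rule demands. In the match case I also need to observe that $\eta \semi v \pss_m M_r$ is defined, which follows from the one-to-one correspondence between the four shapes of $M$ and the canonical values of the positive types $A_m \tensor B_m$, $\one_m$, ${\plus}\{\ell : A_m^\ell\}_{\ell \in L}$, and $\down^n_m A_n$.

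I expect the only real obstacle to be bookkeeping rather than conceptual difficulty: the variable case must treat ordinary and provisional bindings uniformly (the dereferencing rules are written for $x \mapsto v$, but the typing may reach the variable through a context position that was combined with a $[x : A_m]$ slot), and across the frame cases one must carefully align the schematic patterns in the rule table with the actual frame syntax so that Canonical Forms is invoked precisely where a value is consumed by shape, not merely transported.
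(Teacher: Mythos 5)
Your proposal is correct and follows essentially the same strategy as the paper's proof: case analysis on the state (hence on the typing derivation), inversion on the typing of the embedded expressions, frames, and environment, and an appeal to Canonical Forms exactly where a value is consumed by shape. You even flag the same bookkeeping subtlety the paper leaves implicit, namely that dereferencing must also cover variables whose bindings have become provisional.
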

\begin{proof}
  By cases on the typing derivation for the configuration and inversion on the
  typing of the embedded frames, values, and expressions.  We apply the
  canonical forms theorem when we need the shape of a value.
\end{proof}

Purely positive types play an important role because we view values of these
types as \emph{directly observable}, while values of negative types can only
be observed indirectly through their elimination forms.
\[
  \begin{array}{llcl}
    \mbox{Purely positive types}
    & A^+, B^+ & ::= & A^+ \tensor B^+ \mid \one \mid {\plus}\{\ell : A^+_\ell\}_{\ell \in L} \mid \down A^+
  \end{array}
\]
Values of purely positive types are closed, even if values of negative types may
not be.
\begin{lemma}[Positive Values]
  \label{lm:closed}
  If $\Delta \vdash v : A_r^+$ then $\cdot \vdash v : A_r^+$ and all
  declarations in $\Delta$ admit weakening (either due to their mode or because
  they are provisional).
\end{lemma}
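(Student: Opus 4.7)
The proof proceeds by straightforward induction on the typing derivation $\Delta \vdash v : A_r^+$. Since $A_r^+$ is purely positive and $v$ is a value, the last rule used must be one of the positive introduction rules ${\tensor}I$, ${\one}I$, ${\plus}I$, or ${\down}I$; no other introduction rule can produce a value at a purely positive type, and values are never variables. In particular, the positive values all have the tree-like shape described by the canonical forms lemma, so induction follows the shape of $v$.

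The base cases are immediate. For ${\one}I$, we have $v = (\,)$ and $\Delta = \Delta_{\mathsf{W}}$, so $\Delta$ already admits weakening and $\cdot \vdash (\,) : \one_r$ holds by the same rule instantiated with the empty context. For ${\plus}I$, $v = \ell(v')$ with $\Delta \vdash v' : A_r^\ell$ at a purely positive type, so the induction hypothesis yields $\cdot \vdash v' : A_r^\ell$ together with the fact that $\Delta$ admits weakening, and reapplying ${\plus}I$ finishes the case. The ${\tensor}I$ case splits $\Delta = \Delta_1 \join \Delta_2$ with purely positive premises, and the induction hypothesis gives empty re-typings of $v_1$ and $v_2$; since $\cdot = \cdot \join \cdot$, reassembling with ${\tensor}I$ gives $\cdot \vdash (v_1, v_2) : A_r^+$, and both $\Delta_1$ and $\Delta_2$ admit weakening so their merge does too.

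The only case requiring a little care is ${\down}I$, where $v = \mb{down}^n_m v'$ and the rule gives $\Delta = \Delta_{\mathsf{W}} \join \Delta'$ with $\Delta' \geq n$ and $\Delta' \vdash v' : A_n^+$. The inductive hypothesis applied to the premise yields $\cdot \vdash v' : A_n^+$ and the additional information that every declaration in $\Delta'$ admits weakening. We then reapply ${\down}I$ with both $\Delta_{\mathsf{W}}$ and $\Delta'$ chosen to be empty (the side condition $\cdot \geq n$ is vacuous) to conclude $\cdot \vdash \mb{down}^n_m v' : \down^n_m A_n^+$. The outer context $\Delta$ admits weakening because it is the merge of $\Delta_{\mathsf{W}}$ (by hypothesis) and $\Delta'$ (by induction).

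The main subtle point, and the only place the provisional machinery appears, is the closure property \emph{every declaration in the merge of two weakening-admitting contexts again admits weakening}, which follows directly from the extended merge rules: a declaration $x : A_m$ can only be retained in the merge if $\mathsf{C} \in \sigma(m)$ in which case it is combined or otherwise preserved with a weakening-admitting partner, and provisional declarations $[x : A_m]$ always admit (implicit) weakening by definition. No other rule of the system needs to be invoked, so there is no real obstacle beyond keeping the bookkeeping of context merge straight.
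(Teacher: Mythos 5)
Your proof is correct and follows essentially the same route as the paper, which simply says ``by induction on the structure of the typing derivation, recalling that variables are not values''; you have spelled out the case analysis (only the positive introduction rules can type a value at a purely positive type, and each of their contexts is either $\Delta_{\mathsf{W}}$ or a merge of inductively weakenable contexts) that the paper leaves implicit. The closure of weakenability under context merge that you flag at the end is indeed the only bookkeeping point, and your handling of it is sound.
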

\begin{proof}
  By induction on the structure of the typing derivation, recalling that
  variables are not values.
\end{proof}

We call a variable $x : A_m$ \emph{linear} if $\sigma(m) = \{\,\}$, that is, the
mode $m$ admits neither weakening or contraction.  We extend this term to types,
bindings in the environment, etc. in the obvious way.

\begin{theorem}[Freedom from Garbage]
  If $\cdot \vdash e : A_r^+$ and
  $\cdot \semi \epsilon \rgt_r e \longrightarrow^* \eta \semi \epsilon \lft_r v$, then
  $\eta$ does not contain a binding $x \mapsto v$ with $\sigma(m) = \{\,\}$.
\end{theorem}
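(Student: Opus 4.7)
The plan is to combine preservation with \autoref{lm:closed} (Positive Values) to show that every declaration in the typing of the final environment must admit weakening, and then observe that a non-provisional linear binding cannot meet this requirement.

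By iterated preservation, the final state $\eta \semi \epsilon \lft_r v$ has type $A_r^+$. Inverting the state typing rule yields $\eta : (\Delta \join \Delta')$ together with $\Delta \vdash \epsilon : A_r^+ < A_r^+$ and $\Delta' \vdash v : A_r^+$. The typing rule for the empty continuation forces $\Delta = \Delta_{\mathsf{W}}$, so every declaration in $\Delta$ admits weakening. Because $v$ has purely positive type, \autoref{lm:closed} applies and tells us every declaration in $\Delta'$ admits weakening as well (either because its mode contains $\mathsf{W}$ or because it is provisional). Consequently every declaration in $\Delta \join \Delta'$ admits weakening.

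Now suppose, for contradiction, that $\eta$ contains a binding $x \mapsto v'$ whose mode $m$ satisfies $\sigma(m) = \{\,\}$. Because this binding is written as $x \mapsto v'$ rather than $[x \mapsto v']$, the environment typing rule contributes an unbracketed declaration $x : A_m$ to the overall context $\Delta \join \Delta'$. Since $\mathsf{W} \notin \sigma(m)$, such a plain declaration does not admit weakening, contradicting the conclusion of the previous paragraph. Hence no such binding can occur in $\eta$.

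The subtlest point, and the main obstacle, is justifying that we really are reasoning about the final environment and not confusing plain bindings with provisional ones. Inspection of the computation rules in \autoref{fig:computation} shows that a provisional binding $[x \mapsto v']$ is introduced only by the dereferencing rule for a variable whose mode admits contraction, and no other rule rewrites between the two forms. Since $\mathsf{C} \notin \sigma(m)$ for linear $m$, a linear binding that arises during execution either is immediately consumed by its unique dereference (and so removed from $\eta$) or else remains in non-provisional form. In either case the typing argument above applies to the final environment and yields the theorem.
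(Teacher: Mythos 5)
Your first two paragraphs are fine and match the paper's setup: preservation plus inversion of the state typing with $K = \epsilon$, together with the Positive Values lemma, do establish that every declaration in the \emph{top-level} context $\Delta \join \Delta'$ assigned to the final environment admits weakening (by mode or by provisionality). The gap is in the step where you claim that a plain binding $x \mapsto v'$ in $\eta$ "contributes an unbracketed declaration $x : A_m$ to the overall context $\Delta \join \Delta'$." That is not true of the environment typing rules: the rule for $(\eta', x \mapsto w)$ types $\eta'$ at $\Delta \join \Delta''$ where $\Delta''$ types the value $w$, and any declaration consumed in $\Delta''$ that is not preserved by contraction simply does not appear in the resulting context. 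The paper's own example makes exactly this point: in $\eta_0 = (x \mapsto (\,),\ y \mapsto \lambda f.\, f\, x)$ with $\sigma(\mL) = \{\,\}$, the linear binding $x \mapsto (\,)$ contributes \emph{no} declaration to the final context, because $x$ is absorbed into the typing of the value of $y$. So "the top-level context has no plain linear declarations" does not by itself rule out plain linear bindings buried inside $\eta$, and your contradiction does not go through. Your last paragraph, which worries about provisional versus plain forms in the operational rules, addresses a different (and less problematic) issue and does not close this gap.

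What is missing is the paper's third step: an induction on the typing derivation of $\eta$ showing that if the context assigned to an environment contains no linear variables, then neither does the environment. In the inductive case $(\eta', x \mapsto w) : (\Delta, x : A_m)$ with $\eta' : (\Delta \join \Delta')$, $\Delta' \geq m$, and $\Delta' \vdash w : A_m$, one uses the outer hypothesis to conclude $\sigma(m) \neq \{\,\}$, and then \emph{monotonicity} of the preorder ($\Delta' \geq m$ implies $\sigma(k) \supseteq \sigma(m)$ for every $B_k$ in $\Delta'$) to conclude that $\Delta'$ also contains no linear variables, so the induction hypothesis applies to $\eta' : (\Delta \join \Delta')$. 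This is the essential use of monotonicity that propagates the "no linear bindings" property down through the environment, past exactly the absorption phenomenon your argument overlooks.
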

\begin{proof}
  Because $A_r^+$ is purely positive, we know by \autoref{lm:closed} that $v$ is
  closed.

  When the continuation $K$ is empty, the typing rule for valid states
  implies that $\eta : \Delta$ and $\Delta \vdash v : A_r^+$ for some
  $\Delta$.  Since $v$ is closed, $\Delta$ cannot contain any linear
  variables.

  Then we prove by induction on the typing of $\eta$ that none of
  variables in $\eta$ can be linear.  In the inductive case
  \begin{rules}
    \infer[]
    {(\eta', x \mapsto v) : (\Delta, x : A_m)}
    {\eta' : (\Delta \join \Delta') & \Delta' \geq m & \Delta' \vdash v : A_m}
  \end{rules}%
  we know that $m$ must admit weakening or contraction or both.  Since
  $\Delta' \geq m$, by monotonicity, $\Delta'$ must also admit weakening or
  contraction and we can apply the induction hypothesis to
  $\eta' : (\Delta \join \Delta')$.
\end{proof}

We call a variable $x_m$, an expression $e : A_m$, or a binding $x \mapsto v$
\emph{strict} if $\sigma(m) \subseteq \{\mathsf{C}\}$, that is, $m$ does not admit
weakening.

\begin{theorem}[Strictness]
  If $\cdot \vdash e : A_r^+$ and
  $\cdot \semi \epsilon \rgt_r e \longrightarrow^* \eta \semi \epsilon \lft_r v$, then
  every \emph{strict} binding in $\eta$ is of the form $[x \mapsto v]$.
\end{theorem}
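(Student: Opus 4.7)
The plan is to piggyback on the structure of the Freedom from Garbage argument, using \autoref{lm:closed} as the crucial input. Starting from $\cdot \vdash e : A_r^+$ and applying preservation along the reduction sequence, the final state $\eta \semi \epsilon \lft_r v$ is typed at $A_r^+$. Inversion on the state typing rule gives $\eta : (\Delta \join \Delta')$ with $\Delta \vdash \epsilon : A_r^+ < A_r^+$ and $\Delta' \vdash v : A_r^+$. The empty-continuation rule forces $\Delta$ to be of the shape $\Delta_{\mathsf{W}}$, and \autoref{lm:closed} applied to $v$ tells us that every declaration of $\Delta'$ either admits weakening by its mode or is provisional. Hence the full context typing $\eta$ consists entirely of declarations that are weakenable-by-mode or provisional.

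The rest is a structural observation made precise by induction on the derivation $\eta : \Delta_\eta$. The environment-typing rules are matched: a non-provisional binding $x \mapsto v$ contributes the unbracketed declaration $x : A_m$, while a provisional binding $[x \mapsto v]$ contributes the bracketed $[x : A_m]$. Since the context of $\eta$ contains no unbracketed declaration with a strict mode, no non-provisional strict binding can occur in $\eta$. To run the induction I would carry as an invariant the property that $\Delta_\eta$ consists only of weakenable-or-provisional declarations, and verify, in the step case for an unbracketed binding, that the premise split $\Delta_\eta = \Delta'' \join \Delta'''$ propagates this invariant to $\Delta'''$ --- this is where monotonicity kicks in, since $\Delta''' \geq m$ with $\mathsf{W} \in \sigma(m)$ forces every mode in $\Delta'''$ to admit weakening.

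The main obstacle, exactly as in the Freedom from Garbage proof, is the careful handling of the context-split bookkeeping: one must verify that the weakenable-or-provisional invariant is preserved under $\join$ and under peeling off the outermost binding, and that the notational convention $\Delta_{\mathsf{W}}$ (which here admits both weakenable-by-mode and provisional declarations) interacts correctly with the environment typing rules for $x \mapsto v$ versus $[x \mapsto v]$. Once the invariant is maintained, the theorem follows immediately from the shape of the two environment-typing rules; no ingredient beyond monotonicity and \autoref{lm:closed} is required.
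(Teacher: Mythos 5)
Your proposal is correct and follows essentially the same route as the paper's proof: apply the Positive Values lemma to conclude the context typing the final value is weakenable-or-provisional, then induct on the typing derivation of $\eta$, using monotonicity to push the invariant through the context split in the unbracketed-binding case and the $\Delta'_{\mathsf{W}}$ side condition in the provisional case. The only cosmetic difference is that you phrase the invariant as ``all declarations are weakenable-or-provisional'' where the paper says ``strict variables occur only in bracketed form,'' which are equivalent.
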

\begin{proof}
  Because $A_r^+$ is purely positive, we know by \autoref{lm:closed}
  that $v$ is closed.

  When the continuation $K$ is empty, the typing rule for valid states implies
  $\eta : \Delta$ and $\Delta \vdash v : A_r^+$ for some $\Delta$.  Since $v$ is
  closed, $\Delta$ contains strict variables only in the form $[x : A_m]$.

  We prove by induction on the typing of $\eta$ all strict variables in $\eta$
  have the form $[x \mapsto w]$.  There are two inductive cases.
  \begin{rules}
    \infer[]
    {(\eta', x \mapsto w) : (\Delta, x : A_m)}
    {\eta' : (\Delta \join \Delta') & \Delta' \geq m & \Delta' \vdash w : A_m}
  \end{rules}%
  Since $m$ is not strict, it must admit weakening.  Since
  $\Delta' \geq m$, every variable in $\Delta'$ must also admit
  weakening by monotonicity, so we can apply the induction
  hypothesis to $\Delta \join \Delta'$.

  \begin{rules}
    \infer[]
    {(\eta', [x \mapsto v]) : (\Delta, [x : A_m])}
    {\eta' : (\Delta \join \Delta'_{\mathsf{W}}) & \Delta'_{\mathsf{W}} \geq m & \Delta'_{\mathsf{W}} \vdash w : A_m}
  \end{rules}%
  Any declaration in $\Delta'_{\mathsf{W}}$ either directly admits weakening or is of
  the form $[y : A_k]$ for a strict $k$ so we can apply the induction hypothesis
  to $\eta' : (\Delta \join \Delta'_{\mathsf{W}})$.
\end{proof}

In this context of call-by-value, this property expresses that every strict
variable will be read at least once, since a binding $[x \mapsto v]$ arises
only from reading the value of $x$.




\begin{theorem}[Dead Code]
  If $\cdot \vdash e : A_r^+$ and
  $\cdot \semi \epsilon \rgt_r e \longrightarrow^* \eta \semi \epsilon \lft_r v$
  then every state during the computation either evaluates $\rgt_m$ or
  returns $\lft_m$ for $m \geq r$.
\end{theorem}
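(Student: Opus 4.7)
The plan is to reduce the statement to a mode invariant on the continuation stack of any reachable state. Specifically, I would prove two short lemmas: (1) every typable frame $f : B_k < A_m$ satisfies $m \geq k$, and (2) every typable continuation $\Delta \vdash K : C_r < A_m$ satisfies $m \geq r$. Combined with preservation, these give the conclusion directly, since an intermediate state has one of the two forms $\eta \semi K \rgt_m e$ or $\eta \semi K \lft_m v$ and in both cases its typing exhibits a continuation $K : A_r^+ < A_m$, whose mode $m$ is exactly the mode reported by the $\rgt_m$ or $\lft_m$.

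The frame lemma is by case analysis on the frame typing rules. For the homogeneous frames ($\uscore\, e_2$, $v_1\, \uscore$, $\uscore.\ell$, $(\uscore, e_2)$, $(v_1, \uscore)$, $\ell(\uscore)$) the input and output modes of the frame coincide, so $m \geq k$ is reflexive. The two mode-changing frames are precisely those whose inequality is pinned down by the standing presuppositions on the adjoint connectives: $\mb{force}^m_k\, \uscore$ has input mode $m$ and output mode $k$ with $m \geq k$ by the presupposition on $\up^m_k A_k$, and $\mb{down}^n_m\, \uscore$ has input mode $n$ and output mode $m$ with $n \geq m$ by the presupposition on $\down^n_m A_n$. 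Finally the match frame $\mb{match}_m\, \uscore\, M_r$ carries the side condition $m \geq r$ explicitly, inherited from the typing rules ${\tensor}E$, ${\one}E$, ${\plus}E$, and ${\down}E$, all of which stipulate $\Delta \geq m \geq r$.

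The continuation lemma is a one-line induction on $K$: the base case $\epsilon : A_m < A_m$ gives $m \geq m$, and for $K \cdot f : C_r < A_m$ we combine $k \geq r$ (induction hypothesis on $K : C_r < B_k$) with $m \geq k$ (frame lemma applied to $f : B_k < A_m$) and appeal to transitivity of the mode preorder. To finish the theorem, I apply preservation to the initial state $\cdot \semi \epsilon \rgt_r e$ typed at $A_r^+$, so every state $S_i$ in the reduction trace is typable at $A_r^+$; inverting the state typing rule yields a continuation of type $A_r^+ < A_m$ where $A_m$ is the focus of $S_i$, and the continuation lemma delivers $m \geq r$. The initial and final states are covered by the trivial $r \geq r$. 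There is no real obstacle here: the substantive content is just the observation that every abstract-machine frame is mode-monotone, which is exactly what the presuppositions on $\up$, $\down$, and the positive eliminations were designed to enforce.
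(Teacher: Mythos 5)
Your proof is correct, but it reaches the invariant by a different route than the paper. The paper argues operationally: it observes that the only transitions that change the subject's mode are those entering a $\mb{match}$, $\mb{force}$, or $\mb{down}$ (each of which raises the mode) and their matching returns (which lower it back), and then proves by induction over the computation trace that all subjects and continuation frames stay at modes $m \geq r$. You instead establish the same invariant statically: every typable frame $f : B_k < A_m$ is mode-monotone ($m \geq k$), hence by transitivity every typable continuation $K : C_r < A_m$ satisfies $m \geq r$, and preservation then transports this to every state in the trace. The substantive content is the same in both cases --- the inequalities come from the presuppositions $m \geq k$ on $\up^m_k$, $n \geq m$ on $\down^n_m$, and the explicit $m \geq r$ side conditions in the positive eliminations --- but your factoring through preservation and a static lemma on continuation typing is more modular: it avoids re-examining each transition rule and makes the theorem an immediate corollary of type soundness. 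The mild cost is that it leans on the full frame-typing discipline, of which the paper only exhibits sample rules, so you are implicitly completing that table (correctly) for $\mb{force}^m_k\,\uscore$, $\mb{down}^n_m\,\uscore$, and the $\mb{match}_m\,\uscore\,M_r$ frames.
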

\begin{proof}
  Most rule do not change the subject's mode.  Several rules potentially
  raise the mode, name evaluating a $\mb{match}$, a $\mb{force}$, or a $\mb{down}$.
  For each of these there is a corresponding rule lowering the mode back to
  its original, namely return a value to a $\mb{match}$, to a $\mb{force}$,
  or to a $\mb{down}$.

  We say the mode of a frame $f$ is the mode of the following state after a
  value is returned to $f$.  We prove by induction over the computation that in
  all states, all continuation frames and subjects have modes $m \geq r$.
\end{proof}

\begin{corollary}[Erasure]
  Assume $\cdot \vdash e : A_r^+$ and
  $\cdot \semi \epsilon \rgt_r e \longrightarrow^* \eta \semi \epsilon \lft_r
  v$.  Let $\Omega$ be a new term of every type and no transition rule.
  
  If we obtain $e'$ by replacing all subterms $e' : B_k$ for $k \not\geq r$ with
  $\Omega$, then evaluation $e'$ still terminates in a final state.  This final
  state differs from $v$ in that subterms of mode $k \not\geq r$ are also
  replaced by $\Omega$.
\end{corollary}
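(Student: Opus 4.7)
The plan is to prove this as a simulation consequence of the Dead Code theorem. First I would define an erasure map $\lfloor\cdot\rfloor_r$ on the typing derivations of expressions, values, frames, continuations, environments, and machine states that replaces every subderivation at a mode $k$ with $k \not\geq r$ by the constant $\Omega$, and otherwise descends homomorphically into all constructors at the mode recorded by the derivation. By the presuppositions of the rules, the only places where a mode strictly below $r$ can occur at all are inside a $\mb{susp}^m_k$ with $k \not\geq r$ (and, consequently, inside any such suspended value later stored in the environment); every other subterm inherits a mode $\geq r$ because the other connectives are mode-uniform ($\lolli$, $\with$, $\plus$, $\tensor$, $\one$), $\mb{down}^n_m$ presupposes $n \geq m$, and $\mb{match}_m$ requires $m \geq r$.

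The core lemma is a lock-step simulation: if $S$ is reachable from $\cdot \semi \epsilon \rgt_r e$ and $S \longrightarrow S'$, then $\lfloor S \rfloor_r \longrightarrow \lfloor S' \rfloor_r$. By the Dead Code theorem, the subject and outermost frame of every reachable state operate at some mode $m \geq r$, so the active expression or value is never itself erased to $\Omega$, and the transition rule that fires on $S$ also fires on $\lfloor S \rfloor_r$. The cases to check follow the rules of \autoref{fig:computation}: variable dereference uses a binding whose value has the same mode as the variable, hence $\geq r$; matches, projections, and beta-reductions operate on a value whose outermost constructor has mode $\geq r$ and is therefore preserved by erasure; the variables these steps bind into $\eta$ all inherit modes $\geq r$; and forcing and matching on $\mb{down}$ move us to modes $k$ and $n$ that are still $\geq r$ by Dead Code applied to the successor state.

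With the simulation in hand, the corollary follows by induction on the length of the trace $\cdot \semi \epsilon \rgt_r e \longrightarrow^* \eta \semi \epsilon \lft_r v$: the erased trace has equal length and terminates in $\lfloor \eta \rfloor_r \semi \epsilon \lft_r \lfloor v \rfloor_r$, which is exactly the final state described in the statement, with every subterm of mode $k \not\geq r$ replaced by $\Omega$. The main obstacle is verifying that a suspension $\mb{susp}^m_k\; e$ whose body $e$ has been erased to $\Omega$ is never in fact forced during the remainder of the computation, since forcing would bring $\Omega$ into subject position and break the simulation. This is precisely what Dead Code rules out: forcing a value of type $\up_k^m A_k$ demands ambient computation mode $k$, so if $k \not\geq r$ the situation contradicts the invariant that every reachable state runs at a mode $\geq r$, and the dangerous transition can never occur.
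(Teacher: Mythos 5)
Your proposal is correct and is essentially the paper's own argument, elaborated: the paper's proof is a two-sentence sketch stating that the computation of $e'$ parallels that of $e$ and could only get stuck at a state $\eta' \semi K' \rhd_k \Omega$, which the Dead Code theorem forbids since $k \not\geq r$. Your lock-step simulation lemma and the observation that erased subterms can only hide under suspensions at modes $k \not\geq r$ are exactly the details the paper leaves implicit.
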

\begin{proof}
  The computation of $e'$ parallels that of $e$.  It would only get stuck
  for a state $\eta' \semi K' \rhd_k \Omega$, but that is impossible
  by the preceding dead code theorem since $k \not\geq r$.
\end{proof}

\section{Adding Recursion and Call-by-Need}
\label{sec:recursion}

To add recursion at the level of types we allow a global environment $\Sigma$ of
equirecursive type definition $t_m = A_m$ that may be mutually recursive and
restrict $A_m$ to be contractive, that is, start with a type constructor and not
a type name.  Since we view them as equirecursive we can silently unfold them,
and the only modification is in the rule for directional change.
The typing judgment is now parameterized by a signature $\Sigma$, but since it
is fixed we generally omit it.
\begin{rules}
  \infer[{\Rightarrow}/{\Leftarrow}]
  {\Delta \vdash s \chk A_m}
  {\Delta \vdash s \syn A_m' & A_m' = A_m}
\end{rules}%
Type equality follows the standard coinductive definition \citep{Amadio93toplas}.  It
can also be replaced by subtyping $A_m' \leq A_m$ without affecting soundness,
but a formal treatment is beyond the scope of this paper.

In order to model recursion at the level of expressions, we allow
\emph{contextual definitions} \citep{Nanevski08tocl} in the signature.  They
have the form $f [\Delta] : A_m = e$ and may be mutually recursive, so all
definitions are checked with respect to the same global signature $\Sigma$,
using the $\vdash_\Sigma \Sigma'\; \mathit{sig}$ judgment.  We use here a form
of checkable substitution $\Delta' \vdash \theta \chk \Delta$ which we did not
require earlier, and which has the obvious pointwise definition.
\begin{rules}
  \infer[]
  {\vdash_\Sigma \Sigma', t_m = A_m}
  {\vdash_\Sigma \Sigma'\; \mathit{sig}
    & \vdash_\Sigma A_m \; \mathit{type}
    & A_m\; \mathit{contractive}}
  \hspace{1em}
  \infer[]
  {\vdash_\Sigma \Sigma', f[\Delta] : A_m = e\; \mathit{sig}}
  {\vdash_\Sigma \Sigma' \; \mathit{sig}
    & \Delta \vdash_\Sigma e \chk A_m}
  \hspace{1em}
  \infer[]
  {\vdash_\Sigma (\cdot)\; \mathit{sig}}
  {}
  \\[1em]
  \infer[\ms{call}]
  {\Delta' \vdash_\Sigma f[\theta] \syn A_m}
  {f [\Delta] : A_m = e \in \Sigma
    & \Delta' \vdash_\Sigma \theta \chk \Delta}
\end{rules}%
The reason that definitions are contextual is partly pragmatic.  We are used to
reifying, say, $f\, (x : A)\, (y : B) : C = e(x,y)$ as
$f : A \arrow B \arrow C = \lambda x.\, \lambda y.\, e(x,y)$.  But because
$A_m \lolli B_m$ requires the same mode on both sides, we would have to insert
shifts, or generalize the function type and thereby introduce multiple
mode-shifting constructors.  By using contextual definitions, we can directly
express dependence on a mixed context.  For example,
$f\, [x : A_m, y : B_k] : C_r = e(x,y)$ which, by presupposition, requires
$m,k \geq r$ so that the judgment $x : A_m, y : B_k \vdash e(x,y) \chk C_r$ is
well-formed.

In order to demonstrate call-by-need, let's assume that contextual definitions
are indeed treated in call-by need manner.  In order to model this, we create a
new form of environment entry $x \mapsto e$ where $e$ is an unevaluated
expression.  A substitution
$\theta = (x_1 \mapsto e_1, \ldots, x_n \mapsto e_n)$ then also represents an
environment in which none of the expressions $e_i$ depend on any of the
variables $x_i$.  We also need to new continuation frame $x \mapsto \uscore$ to
create a value binding in the environment.  We have the following new rules:
\[
  \begin{array}{lcll}
    \eta \semi K \rgt_m f[\theta]
    & \longrightarrow
    & \eta, \theta \semi K \rgt_m e \qquad
      \mbox{for $f[\Delta]:A_m = e \in \Sigma$}
    \\[1ex]
    \eta, x \mapsto e, \eta' \semi K \rgt_m x
    & \longrightarrow
    & \eta, [x \mapsto e], \eta' \semi K \cdot (x \mapsto \uscore) \rgt_m e
    \\
    \eta, [x \mapsto e], \eta' \semi K \cdot (x \mapsto \uscore) \lft_m v
    & \longrightarrow
    & \eta, [x \mapsto v], \eta' \semi K \lft_m v
  \end{array}
\]
We can further simplify these rules if $m$ does \emph{not} admit contraction,
dropping $[x \mapsto e]$ and $[x \mapsto v]$.  We could also model ``black
holes'' by rebinding not $[x \mapsto e]$ in the second rule, but
$[x \mapsto \Omega]$ where $\Omega$ has every type, but does not reduce.

The theorems from the previous section continue to hold.  They imply that in a
final state of observable type there will be no binding $x \mapsto e$ in the
environment if the mode of $x$ is strict (that is, does not admit contraction).
In other words, strict expressions are always evaluated.

\section{Algorithmic Type Checking}
\label{sec:algo}

The bidirectional type system of \autoref{sec:nd} is not yet algorithmic, among
other things because splitting a given context into
$\Delta = (\Delta_1 \semi \Delta_2)$ is nondeterministic.  One standard solution
is track which are hypotheses are used in one premise (which ends up
$\Delta_1$), subtract them from the available ones, and pass the remainder into
the second premise (which ends up $\Delta_2$ together with an overall remainder)
\citep{Cervesato00tcs}.  This originated in proof search, but here when we
actually have a proof terms available to check, other options are available.
\emph{Additive} resource management computes the \emph{used} hypotheses (rather
than the unused ones) and merges (``adds'') them \citep{Atkey18lics}, which is
conceptually slightly simpler and also has been shown to be more efficient
\citep{Hughes20lopstr}.

The main complication in the additive approach are internal and external choice,
more specifically, the ${\with}R$ and ${\plus}L$ rules when the choice is empty.
For example, while check $\Delta \vdash \{\,\} \chk {\with}\{\,\}$ any subset of
$\Delta$ could be used.  We reuse the idea from the dynamics to have hypothesis
$[x : A_m]$ which may or may not have been used.  Unfortunately, this leads
to a plethora of different judgments, but it is not clear how to simplify them.
In defining the additive approach, the main judgment is
\[
  \Gamma \vdash e \Longleftrightarrow A_m \uses \Xi
\]
where $\Gamma$ is a plain (that is, free of provisional hypotheses) context
containing \emph{all} variables that might occur in $e$ (regardless of mode or
structural properties) and $\Xi$ is a context that may contain provisional
hypotheses.  We maintain the mode invariant $\Xi \geq m$ (even if it may be the
case that $\Gamma \not\geq m$).

Because we keep the contexts $\Delta$ free of provisional hypotheses,
we define the relation $\Xi \sqsupseteq \Delta$ which may remove
or keep provisional hypotheses.
\[
  \begin{array}{rclcrcl}
    (\Xi, x : A_m) & \sqsupseteq & (\Delta, x : A) & \mbox{if} & \Xi & \sqsupseteq & \Delta \\
    (\Xi, [x : A_m]) & \sqsupseteq & (\Delta, x : A) & \mbox{if} & \Xi & \sqsupseteq & \Delta \\
    (\Xi, [x : A_m]) & \sqsupseteq & \Delta & \mbox{if} & \Xi & \sqsupseteq & \Delta \\
    (\cdot) & \sqsupseteq & (\cdot)
  \end{array}
\]
With this relation, we can state the \emph{soundness} of algorithmic typing (postponing
the proof).
\begin{theorem}[Soundness of Algorithmic Typing]
  \label{thm:algo-sound}\mbox{}\newline
  If $\Gamma \vdash e \Longleftrightarrow A_m \uses \Xi$ and $\Xi \sqsupseteq \Delta$
  then $\Delta \vdash e \Longleftrightarrow A_m$.
\end{theorem}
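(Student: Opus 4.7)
The plan is to proceed by structural induction on the derivation of $\Gamma \vdash e \Longleftrightarrow A_m \uses \Xi$ (for both the checking and synthesis judgments simultaneously, since the algorithmic system is bidirectional), at each step translating the algorithmic rule into its declarative counterpart. Before starting the induction I would establish three preparatory lemmas about the $\sqsupseteq$ relation, which is where most of the real work lives.

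The first lemma is a \emph{splitting} property: if $\Xi_1 \join \Xi_2 \sqsupseteq \Delta$, then there exist $\Delta_1, \Delta_2$ with $\Delta = \Delta_1 \join \Delta_2$ and $\Xi_i \sqsupseteq \Delta_i$. This is the algorithmic analogue of the substitution split lemma and is proved by induction on the length of $\Delta$, case-analyzing whether each variable appears plain or provisional in the $\Xi_i$'s; the cases $(\Xi, x{:}A_m)$ vs.\ $(\Xi, [x{:}A_m])$ must be carefully aligned with the cases of context merge, and contraction-admitting modes permit the "keep on both sides" clause. The second lemma is \emph{mode monotonicity}: if $\Xi \sqsupseteq \Delta$ and $\Xi \geq m$ then $\Delta \geq m$, which is immediate from the pointwise definition because $\sqsupseteq$ only removes hypotheses or strips provisional markers. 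The third lemma is an \emph{embedding} of $\sqsupseteq$ into admissibility of weakening: if $\Xi \sqsupseteq \Delta$ and $\Delta \vdash e \Longleftrightarrow A_m$, then any provisional or omitted hypothesis in $\Xi \setminus \Delta$ has a mode admitting weakening (either because it was never required, or by the structural discipline the algorithm enforces when it marks a binding provisional), so we can weaken $\Delta$ to any intermediate context compatible with $\Xi$.

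With those lemmas in hand, the induction is largely mechanical. For the hypothesis rule, the algorithm returns a $\Xi$ in which $x : A_m$ is plain and all other entries of $\Gamma$ are provisional (or absent); any $\Delta$ with $\Xi \sqsupseteq \Delta$ then contains $x : A_m$, and admissibility of weakening (transferred from the sequent calculus via Theorems 4.1 and 4.2, or proved directly for the bidirectional natural deduction) discharges the extras. For the negative introduction and elimination rules ($\lolli, \with, \up$), where the algorithm simply threads $\Xi$ through the premises, the IH applies directly, using mode monotonicity to check the side conditions on $\geq$. For the positive rules and the ${\lolli}E$ rule, where the algorithm merges the $\Xi$'s of the premises, the splitting lemma partitions $\Delta$ into $\Delta_1 \join \Delta_2$ so that the IH can be applied to each premise separately, and the declarative rule reassembles them. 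The directional-change rules ${\Rightarrow}/{\Leftarrow}$ and ${\Leftarrow}/{\Rightarrow}$ are immediate.

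The main obstacle is the pair of "maximally flexible" rules: the ${\with}I$ rule on the empty label set and the ${\plus}E$ rule on the empty label set, together with the ${\one}I$ and hypothesis rules. In the algorithmic presentation these return a $\Xi$ that is as \emph{permissive} as possible (potentially all of $\Gamma$ marked provisional), because the algorithm cannot yet commit to which hypotheses count as consumed. Consequently a given $\Delta$ with $\Xi \sqsupseteq \Delta$ can be almost arbitrary, and we must appeal to admissibility of weakening to produce the declarative derivation at exactly $\Delta$. The subtle point is ensuring that the weakening we use respects independence --- every dropped or re-inserted hypothesis must have a mode that is both $\geq m$ and admits weakening --- which is precisely what the embedding lemma above guarantees, provided the algorithmic rules have been set up to only ever mark as provisional those hypotheses for which this is legitimate.
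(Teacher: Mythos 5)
Your overall strategy---rule induction on the algorithmic derivation together with inversion/decomposition lemmas relating each context operation to $\sqsupseteq$---is exactly the paper's (its stated proof is literally ``by rule induction on the algorithmic typing derivation and inversion of the $\Xi \sqsupseteq \Delta$ judgment,'' supported by the lemmas about $\ctxrm$, $\bbar_m$, and $\lub$ established just before). Your splitting lemma for $\join$ is the right tool for ${\lolli}E$, ${\tensor}I$, and the match-style eliminations, and corresponds to what the paper leaves implicit under ``inversion.''

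There is, however, one concrete gap. You classify $\with$ among the rules that ``simply thread $\Xi$ through the premises'' so that ``the IH applies directly.'' That is not what the algorithm does: ${\with}I$ with a non-empty label set (and likewise the branches of ${\plus}E$) combines the premises' output contexts with $\lub$, not with $\join$ and not by threading. The induction hypothesis then yields derivations $\Delta_\ell \vdash e_\ell \chk A^\ell_m$ in possibly \emph{different} contexts $\Delta_\ell$ with $\Xi_\ell \sqsupseteq \Delta_\ell$, while the declarative ${\with}I$ rule demands all premises in one and the same $\Delta$. Closing this requires precisely the paper's lemma that $\Xi_1 \lub \Xi_2 \sqsupseteq \Delta$ yields $\Xi_i \sqsupseteq \Delta_i$ with $\Delta = \Delta_i \join \Delta^i_{\mathsf{W}}$, followed by admissibility of weakening to lift each $\Delta_i$ to $\Delta$; your $\join$-splitting lemma does not apply, because $\lub$ records a hypothesis as used whenever it is used in \emph{some} branch rather than distributing it to one premise. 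Relatedly, your third preparatory lemma is false as stated: a provisional hypothesis $[x : A_k]$ need not have a weakenable mode, since $[\Gamma|_m]$ marks \emph{every} hypothesis of mode $\geq m$ provisional, including strict ones. This happens not to sink the empty-choice cases---there the declarative rule has no premises and accepts any $\Delta$, so no weakening is needed at all---but the lemma cannot be relied on elsewhere.
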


For completeness we need a different relation $\Delta \geq \Xi$ which
means that $\Xi$ contains a legal subset of the hypotheses in $\Delta$.
This means hypotheses in $\Delta$ might be in $\Xi$ (possibly provisional)
or not, but then only if they can we weakened.
\[
  \begin{array}{rclcrcll}
    (\Delta, x : A_m) & \geq & (\Xi, x : A_m) & \mbox{if} & \Delta & \geq & \Xi \\
    (\Delta, x : A_m) & \geq & (\Xi, [x : A_m]) & \mbox{if} & \Delta & \geq & \Xi \\
    (\Delta, x : A_m) & \geq & \Xi & \mbox{if} & \Delta & \geq & \Xi & \mbox{provided $\mathsf{W} \in \sigma(m)$} \\
    (\cdot) & \geq & (\cdot)
  \end{array}
\]
With this relation we can state the \emph{completeness} of algorithmic typing
(also postponing the proof).

\begin{theorem}[Completeness of Algorithmic Typing]
  \label{thm:algo-complete}\mbox{}\newline
  If $\Delta \vdash e \Longleftrightarrow A_m$ then
  $\Delta \vdash e \Longleftrightarrow A_m \uses \Xi$ for some $\Xi$ with $\Delta \geq \Xi$
\end{theorem}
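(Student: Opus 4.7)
The plan is to proceed by simultaneous rule induction on the two bidirectional typing derivations $\Delta \vdash e \chk A_m$ and $\Delta \vdash s \syn A_m$, in each case building a matching algorithmic derivation $\Delta \vdash e \Longleftrightarrow A_m \uses \Xi$ and exhibiting the witness $\Xi$ with $\Delta \geq \Xi$. The overall structure will be symmetric to the soundness proof (\autoref{thm:algo-sound}), but the direction of the comparison relation is different, so the bookkeeping is dual.

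Before doing the induction I would establish two small lemmas about the relation $\Delta \geq \Xi$. The first is a \emph{merge lemma}: if $\Delta_1 \geq \Xi_1$ and $\Delta_2 \geq \Xi_2$, and $\Delta_1 \join \Delta_2$ is defined, then $\Delta_1 \join \Delta_2 \geq \Xi_1 \join \Xi_2$ under the extended merge operation of \autoref{sec:dynamics}. The proof is by case analysis on the two merges, using monotonicity to propagate provisional tags correctly (when one side is plain $x : A_m$ and the other is $[x : A_m]$, the merge produces $x : A_m$, which is compatible with the corresponding declarative hypothesis admitting contraction when it must). The second is a \emph{full-provisional lemma}: for any weakenable $\Delta_{\mathsf{W}}$, we have $\Delta_{\mathsf{W}} \geq [\Delta_{\mathsf{W}}]$, needed for the leaf rules of negative and nullary connectives.

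With these lemmas in hand, the induction proceeds one case per bidirectional rule. For $\ms{hyp}$, I take $\Xi = [\Delta_{\mathsf{W}}], x : A_m$ and obtain $\Delta_{\mathsf{W}} \join x : A_m \geq \Xi$ directly. For unary introduction rules like ${\lolli}I$, ${\with}I$ (nonempty), ${\up}I$ and ${\plus}I$, I apply the IH to the premise, obtaining some $\Xi'$, then peel off any bound hypothesis (for ${\lolli}I$) and return the appropriate $\Xi$; here I must check that the bound variable's structural behavior in $\Xi'$ is consistent with its mode, which follows because the declarative premise already respects those constraints. For the empty cases ${\with}I_{L=\emptyset}$ and ${\one}I$, the declarative rule forces $\Delta = \Delta_{\mathsf{W}}$, so the full-provisional lemma supplies $\Xi = [\Delta]$. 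For the multi-premise rules where the declarative system performs a context split (${\lolli}E$, ${\tensor}I$, ${\with}E$, ${\tensor}E$, ${\one}E$, ${\plus}E$, ${\down}E$, and the rule ${\Leftarrow}/{\Rightarrow}$), I apply the IH on each premise, obtaining $\Xi_i$ with $\Delta_i \geq \Xi_i$; the algorithmic rule runs the two subderivations against the merged input context and additively combines their outputs into $\Xi_1 \join \Xi_2$, at which point the merge lemma yields $\Delta_1 \join \Delta_2 \geq \Xi_1 \join \Xi_2$ as required. The rules ${\Rightarrow}/{\Leftarrow}$ and the synthesis rules ${\lolli}E$, ${\with}E$, ${\up}E$ are handled analogously, invoking the appropriate clause of the mutual IH.

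The main obstacle I expect is the merge lemma itself: combining two algorithmic outputs additively interacts delicately with the preorder and the $\{\mathsf{W},\mathsf{C}\}$-based provisional bookkeeping, particularly when one side uses a hypothesis plainly and the other leaves it provisional. This is precisely the novelty of the additive formulation, and it is where the extended context merge from \autoref{sec:dynamics} has to be shown to dovetail with the declarative merge. Once this lemma is in place, every inductive case is essentially a direct translation of the corresponding declarative rule, and completeness follows.
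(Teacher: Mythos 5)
Your top-level strategy---rule induction on the bidirectional derivation, with auxiliary lemmas relating the declarative context operations to the algorithmic output contexts---is the same skeleton the paper uses (its stated proof is exactly ``by rule induction on the given bidirectional typing,'' relying on the lemmas it establishes for the operations $\lub$, $\bbar_m$, and $\ctxrm$). However, there is a concrete gap in how you distribute the cases. You file ${\with}I$ (nonempty) among the ``unary introduction rules'' and ${\plus}E$ among the ``context split'' rules handled by your $\join$-merge lemma, but neither of these rules combines its branch outputs with $\join$: they use the least-upper-bound operation $\Xi_1 \lub \Xi_2$, and the declarative versions of these rules do \emph{not} split the context---every branch sees the same $\Delta$. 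The lemma you actually need there is the paper's $\lub$ lemma (i): if $\Delta \geq \Xi_1$ and $\Delta \geq \Xi_2$ then $\Delta \geq \Xi_1 \lub \Xi_2$. This is not a cosmetic omission: reconciling branches that use different subsets of the hypotheses (one plainly, one provisionally, one not at all) is exactly what the paper identifies as ``the main complication in the additive approach,'' and your proposal never mentions $\lub$ at all. Relatedly, ${\up}I$ is not a plain unary case either; its output is $\Xi\bbar_m$, and you need the corresponding lemma that $\Delta \geq m$ and $\Delta \geq \Xi$ imply $\Delta \geq \Xi\bbar_m$ (plus the observation that $\bbar_m$ cannot fail here).

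Two smaller points. First, in the genuinely splitting rules (${\lolli}E$, ${\tensor}I$, the match eliminations), the inductive hypotheses give you algorithmic derivations with input contexts $\Delta_1$ and $\Delta_2$ separately, whereas the algorithmic rule requires both premises to be run against the \emph{same} input $\Delta_1 \join \Delta_2$. You need a monotonicity-in-$\Gamma$ lemma for the algorithm, and because the nullary rules emit $[\Gamma|_m]$, enlarging $\Gamma$ can enlarge the output $\Xi$ by provisional hypotheses; your merge lemma must be stated so that $\Delta \geq \Xi$ survives this perturbation. Second, in the $\ms{hyp}$ case you cannot ``take'' $\Xi = [\Delta_{\mathsf{W}}], x : A_m$: the algorithm deterministically outputs $\Xi = (x : A_m)$, and the claim goes through because the third clause of $\Delta \geq \Xi$ lets the weakenable hypotheses of $\Delta_{\mathsf{W}}$ be absent from $\Xi$. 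These repairs fit within your framework, but as written the proposal does not yet handle the choice constructs, which are the crux of this theorem.
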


For the algorithm itself we need several operations.  The first,
$\Xi \ctxrm x : A$ removes $x : A$ from $\Xi$ if this is legal operation.
Its prototypical use is in the ${\lolli}I$ rule:
\begin{rules}
  \infer[{\lolli}I]
  {\Gamma \vdash \lambda x.\, e \chk A_m \lolli B_m \uses (\Xi \ctxrm x:A_m)}
  {\Gamma, x:A_m \vdash e \chk B_m \uses \Xi}
\end{rules}
For this rule application to be correct, $x : A_m$ must either have been used
and therefore occur in $\Xi$, or the mode $m$ must allow weakening.
\[
  \begin{array}{rclcll}
    (\Xi, x : A_m) & \ctxrm & x : A_m & = & \Xi \\
    (\Xi, [x : A_m]) & \ctxrm & x : A_m & = & \Xi \\
    (\Xi, y : B_k) & \ctxrm & x : A_m & = & (\Xi \ctxrm x : A_m), y : B_k & \mbox{provided $y \neq x$} \\
    (\Xi, [y : B_k]) & \ctxrm & x : A_m & = & (\Xi \ctxrm x : A_m), y : B_k & \mbox{provided $y \neq x$} \\
    (\cdot) & \ctxrm & x : A & = & (\cdot) & \mbox{provided $\mathsf{W} \in \sigma(m)$}
  \end{array}
\]
We have the following fundamental property (modulo exchange on contexts, as
usual).
\begin{lemma}
  If $\Xi \ctxrm x : A_m = \Xi'$ then
  \begin{enumerate}[(i)]
  \item either $\Xi = (\Xi', x : A_m)$
  \item or $\Xi = (\Xi', [x : A_m])$
  \item or $\Xi = \Xi'$ with $x \not\in \mathrm{dom}(\Xi)$ and $\mathsf{W} \in \sigma(m)$.
  \end{enumerate}
\end{lemma}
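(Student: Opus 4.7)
The plan is to do induction on the derivation of the equation $\Xi \ctxrm x : A_m = \Xi'$, which, because $\ctxrm$ is defined by five clauses, amounts to a case analysis on which clause was applied at the root, together with an inductive appeal for the two recursive clauses. Since each of the three cases in the conclusion corresponds to a distinct way the definition can terminate the removal, the main task is just to trace these correspondences.

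First I would handle the non-recursive clauses. Clause~(1) gives $\Xi = (\Xi', x : A_m)$ directly, which is case~(i). Clause~(2) gives $\Xi = (\Xi', [x : A_m])$, which is case~(ii). The empty-context clause~(5) applies only when $\mathsf{W} \in \sigma(m)$, giving $\Xi = (\cdot) = \Xi'$, and vacuously $x \notin \mathrm{dom}(\Xi)$, so we are in case~(iii).

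For the recursive clauses~(3) and~(4), the equation has the form $(\Xi_0, y:B_k) \ctxrm x:A_m = (\Xi_0 \ctxrm x:A_m), y:B_k$ (or its provisional variant) with the side condition $y \neq x$. Writing $\Xi_0 \ctxrm x:A_m = \Xi'_0$, so that $\Xi' = (\Xi'_0, y:B_k)$ (respectively $(\Xi'_0, [y:B_k])$), the induction hypothesis applied to $\Xi_0 \ctxrm x:A_m = \Xi'_0$ gives one of the three disjuncts for $\Xi_0$. In each sub-case I extend $\Xi_0$ by the trailing $y:B_k$ (or $[y:B_k]$) and observe that the disjunct is preserved: in sub-case~(i) we get $\Xi = (\Xi'_0, x:A_m, y:B_k) = (\Xi', x:A_m)$ up to exchange; sub-case~(ii) is analogous; in sub-case~(iii), $x$ still does not occur in $\Xi$ because $y \neq x$, and $\mathsf{W} \in \sigma(m)$ is preserved from the hypothesis.

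There is no real obstacle: the clauses of $\ctxrm$ are mutually exclusive on their triggering pattern, and the three disjuncts of the lemma are exactly in bijection with the two base cases that consume an $x$-entry and the terminating base case~(5). The only thing to be careful about is the implicit use of exchange when reassociating the appended $y:B_k$ past the recovered $x:A_m$ or $[x:A_m]$, which is justified by the standing convention that contexts are unordered.
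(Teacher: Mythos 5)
Your proof is correct, and since the paper states this lemma without proof, the straightforward induction on the definition of $\ctxrm$ (base clauses giving disjuncts (i), (ii), (iii) directly, recursive clauses preserved by the induction hypothesis plus exchange) is exactly the intended argument. One remark: the paper's fourth clause as printed yields $(\Xi \ctxrm x : A_m), y : B_k$ with the provisional brackets on $y$ dropped, under which reading the recursive case would not literally preserve disjuncts (i) and (ii); you have silently read it as producing $[y : B_k]$, which is the reading under which the lemma holds and is surely what was meant.
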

We also need two forms of context restriction.  The first $\Xi\bbar_m$ removes
all hypotheses whose mode is \emph{not} greater or equal to $m$ to restore our
invariant.  It fails if $\Xi$ contains a \emph{used} hypothesis $B_r$ with
$r \not\geq m$.  It is used only in the ${\up}I$ rule to restore the invariant,
\begin{rules}
  \infer[{\up}I]
  {\Gamma \vdash \mb{susp}\; e \chk \up_k^m A_k \uses \Xi\bbar_m}
  {\Gamma \vdash e \chk A_k \uses \Xi}
\end{rules}
\[
  \begin{array}{rcll}
    (\Xi, x:A_k)\bbar_m & = & \Xi\bbar_m, x:A_k & \mbox{provided $k \geq m$} \\
    (\Xi, [x:A_k])\bbar_m & = & \Xi\bbar_m, [x:A_k] & \mbox{provided $k \geq m$} \\
    (\Xi, [x:A_k])\bbar_m & = & \Xi\bbar_m & \mbox{provided $k \not\geq m$} \\
    (\cdot)\bbar_m & = & (\cdot)
  \end{array}
\]
It has the following defining property.
\begin{lemma}\mbox{}
  If $\Delta \geq m$ and $\Delta \geq \Xi$ then $\Delta \geq \Xi\bbar_m$.
\end{lemma}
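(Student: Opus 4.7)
My plan is to proceed by induction on the derivation of $\Delta \geq \Xi$, but with the observation that under the two hypotheses the restriction $\Xi\bbar_m$ is actually the identity on $\Xi$. Once this is established, the conclusion $\Delta \geq \Xi\bbar_m$ reduces immediately to the assumption $\Delta \geq \Xi$.

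To make the key observation precise, I would first argue that every mode $k$ appearing in $\Xi$ (in a declaration $x : A_k$ or $[x : A_k]$) must satisfy $k \geq m$. Inspecting the three inductive clauses defining $\Delta \geq \Xi$, every declaration of $\Xi$ comes from a matching declaration $x : A_k$ in $\Delta$ with the same mode; only extra declarations of $\Delta$ (which must admit weakening) can be omitted from $\Xi$. Since $\Delta \geq m$ gives $k \geq m$ for every mode $k$ occurring in $\Delta$, we conclude $k \geq m$ for every mode in $\Xi$ as well. Inspecting the definition of $\bbar_m$, only the first two clauses can apply, and both leave the declaration unchanged; the third clause, which would remove a provisional hypothesis, demands $k \not\geq m$ and never fires. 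Hence $\Xi\bbar_m$ is defined and equal to $\Xi$.

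If one prefers a direct recursive argument rather than this global observation, the induction on $\Delta \geq \Xi$ has three cases that mirror the clauses of the definition: the two clauses that keep a declaration of $\Delta$ inside $\Xi$ (either fully or as provisional) reapply the corresponding clause of $\bbar_m$ and then the same clause of $\geq$ to the inductive hypothesis, using $k \geq m$ inherited from $\Delta \geq m$; the clause that drops a declaration of $\Delta$ (requiring $\mathsf{W} \in \sigma(k)$) just passes $\Delta \geq m$ to the prefix and reapplies the weakening clause to the IH. I do not expect any real obstacle; the only point that needs explicit mention is why the potentially partial operation $\bbar_m$ is well-defined in this situation, and that is precisely what the joint assumption $\Delta \geq m$ and $\Delta \geq \Xi$ guarantees.
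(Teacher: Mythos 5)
Your proof is correct. The paper states this lemma without proof, and your argument — inducting on $\Delta \geq \Xi$ after observing that every declaration of $\Xi$ is matched by one of the same mode in $\Delta$, so that $\Delta \geq m$ forces $\Xi \geq m$ and hence $\Xi\bbar_m$ is defined and equal to $\Xi$ — is exactly the routine argument intended, with the added benefit of making explicit why the partial operation $\bbar_m$ cannot fail here.
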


The second form of context restriction occurs in the case of an
empty internal or external choice.  All of the hypothesis that are
allowed by the independence principle could be considered used, but
they might also not.  We write $[\Gamma|_m]$ and define:
\[
  \begin{array}{rcll}
    [(\Gamma, x : A_k)|_m] & = & [\Gamma|_m], [x : A_k] & \mbox{provided $k \geq m$} \\\relax
    [(\Gamma, x : A_k)|_m] & = & [\Gamma|_m] & \mbox{provided $k \not\geq m$} \\\relax
    [(\cdot)|_m] & = & (\cdot)
  \end{array}
\]
It is used only in the nullary case for internal and external choice.
\begin{rules}
  \infer[{\with}I_0]
  {\Gamma \vdash \{\,\} \chk {\with}_m\{\,\} \uses [\Gamma|_m]}
  {}
  \hspace{3em}
  \infer[{\plus}E_0]
  {\Gamma \vdash \mb{match}\; s\; (\,) \chk C_r \uses \Xi \semi [\Gamma|_r]}
  {\Gamma \vdash s \syn \oplus_m\{\,\} \uses \Xi
    & m \geq r}
\end{rules}
We come to the final operation $\Xi_1 \lub \Xi_2$ which is needed
for ${\with}I$ and ${\plus}E$.  We show the case of ${\with}I_2$,
that is, the binary version.
\begin{rules}
  \infer[{\with}I_2]
  {\Gamma \vdash \{\pi_1 \Rightarrow e_1, \pi_2 \Rightarrow e_2\}
    \chk {\with}\{\pi_1 : A_m, \pi_2 : B_m\} \uses \Xi_1 \lub \Xi_2}
  {\Gamma \vdash e_1 \chk A_m \uses \Xi_1
    & \Gamma \vdash e_2 \chk B_m \uses \Xi_2}
\end{rules}
Variables $y : B_k$ that are definitely used in $\Xi_1$ or $\Xi_2$ must also
be used in the other, or be available for weakening.  This could be
because they are provisional $[y : B_k]$ or because mode $k$ admits
weakening.  This idea is captured by the following definition.
\[
  \begin{array}{rclcll}
    (\Xi_1, x:A_m) & \lub & (\Xi_2, x:A_m) & = & (\Xi_1 \lub \Xi_2), x : A_m \\
    (\Xi_1, [x:A_m]) & \lub & (\Xi_2, x:A_m) & = & (\Xi_1 \lub \Xi_2), x : A_m \\
    (\Xi_1, x:A_m) & \lub & (\Xi_2, [x:A_m]) & = & (\Xi_1 \lub \Xi_2), x : A_m \\
    (\Xi_1, [x:A_m]) & \lub & (\Xi_2, [x:A_m]) & = & (\Xi_1 \lub \Xi_2), [x : A_m] \\
    (\Xi_1, x:A_m) & \lub & \Xi_2 & = & (\Xi_1 \lub \Xi_2), x:A_m 
                      & \mbox{for $x \not\in \rm{dom}(\Xi_2)$, $\mathsf{W} \in \sigma(m)$} \\
    \Xi_1 & \lub & (\Xi_2, x:A_m) & = & (\Xi_1 \lub \Xi_2), x:A_m 
                      & \mbox{for $x \not\in \rm{dom}(\Xi_1)$, $\mathsf{W} \in \sigma(m)$} \\
    (\Xi_1, [x : A_m]) & \lub & \Xi_2 & = & \Xi_1 \lub \Xi_2 
                        & \mbox{for $x \not\in \rm{dom}(\Xi_2)$} \\
    \Xi_1 & \lub & (\Xi_2, [x : A_m]) & = & \Xi_1 \lub \Xi_2 
                        & \mbox{for $x \not\in \rm{dom}(\Xi_1)$} \\
    (\cdot) & \lub & (\cdot) & = & (\cdot)
  \end{array}
\]
We have the following two properties, expressing the property
of a least upper bound in two slightly asymmetric judgments.
\begin{lemma}\mbox{}
  \begin{enumerate}[(i)]
  \item If $\Delta \geq \Xi_1$ and $\Delta \geq \Xi_2$ then $\Delta \geq \Xi_1 \lub \Xi_2$.
  \item if $\Xi_1 \lub \Xi_2 \sqsupseteq \Delta$ then for some $\Delta_1$, $\Delta^1_{\mathsf{W}}$,
    $\Delta_2$, $\Delta^2_{\mathsf{W}}$ we have $\Xi_1 \sqsupseteq \Delta_1$, $\Xi_2 \sqsupseteq \Delta_2$
    and $\Delta = \Delta_1 \join \Delta_{\mathsf{W}}^1 = \Delta_2 \join \Delta_{\mathsf{W}}^2$.
  \end{enumerate}
\end{lemma}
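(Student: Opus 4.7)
The plan is to prove (i) and (ii) simultaneously by induction on the derivation of $\Xi_1 \lub \Xi_2$, which is the same as induction on the combined size of the two contexts. Each of the eight defining clauses of $\lub$ contributes a case; the base clause $(\cdot) \lub (\cdot) = (\cdot)$ is immediate in both directions, since $\Delta \geq (\cdot)$ forces $\Delta$ to consist entirely of variables of modes admitting weakening, and the decomposition in (ii) is trivial.

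For part (i), in each inductive step I fix a $\Delta$ satisfying both $\Delta \geq \Xi_1'$ and $\Delta \geq \Xi_2'$ for the larger contexts and invert both $\geq$-derivations at the variable $x$ whose treatment is dictated by the $\lub$-clause. For example, for $(\Xi_1, x:A_m) \lub \Xi_2 = (\Xi_1 \lub \Xi_2), x:A_m$ with $x \notin \mathrm{dom}(\Xi_2)$ and $\mathsf{W} \in \sigma(m)$, the hypothesis $\Delta \geq (\Xi_1, x:A_m)$ forces $\Delta = \Delta', x:A_m$ by the first clause of $\geq$; then $\Delta \geq \Xi_2$ (since $x \notin \mathrm{dom}(\Xi_2)$) must have been derived by absorbing the $x:A_m$ into the weakenable slack, leaving $\Delta' \geq \Xi_2$. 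The induction hypothesis gives $\Delta' \geq \Xi_1 \lub \Xi_2$ and reattaching $x:A_m$ by the first clause of $\geq$ completes the case. The clauses where $\lub$ drops a provisional $[x:A_m]$ entirely are handled by observing $x \notin \mathrm{dom}(\Xi_2)$ so the $[x:A_m]$ must have been supplied by $\Delta$ via the provisional clause of $\geq$, which we may discard.

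For part (ii), I invert $\Xi_1 \lub \Xi_2 \sqsupseteq \Delta$ in each case to determine whether the topmost variable is carried into $\Delta$ non-provisionally, carried from a provisional, or dropped. The induction hypothesis applied to the smaller merge yields $\Delta_1', \Delta_{\mathsf{W}}^{1\prime}, \Delta_2', \Delta_{\mathsf{W}}^{2\prime}$ with $\Delta' = \Delta_1' \join \Delta_{\mathsf{W}}^{1\prime} = \Delta_2' \join \Delta_{\mathsf{W}}^{2\prime}$, and I rebuild the witnesses for the larger merge. In the symmetric cases where both sides have $x:A_m$ (non-provisional or both provisional-kept), I simply append $x:A_m$ to both $\Delta_i$. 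In the asymmetric case where $\lub$ absorbs $x$ on one side, say $(\Xi_1, x:A_m) \lub \Xi_2$ with $\mathsf{W} \in \sigma(m)$, I set $\Delta_1 := \Delta_1', x:A_m$ while pushing $x:A_m$ into $\Delta_{\mathsf{W}}^2$; the side condition $\mathsf{W} \in \sigma(m)$ is exactly what legitimizes this placement into the weakenable slack.

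The main obstacle I anticipate is the bookkeeping for the two clauses where $\lub$ drops a provisional entry outright, such as $(\Xi_1, [x:A_m]) \lub \Xi_2 = \Xi_1 \lub \Xi_2$ with $x \notin \mathrm{dom}(\Xi_2)$. Here the merged context is oblivious to $x$, so $\Delta$ cannot mention it either; the IH decomposition of $\Delta$ is reused verbatim, and I define $\Delta_1 := \Delta_1'$ with $(\Xi_1,[x:A_m]) \sqsupseteq \Delta_1$ justified by the third clause of $\sqsupseteq$ (which discards a provisional entry), and $\Delta_2 := \Delta_2'$. The subtle check is that neither $\Delta_{\mathsf{W}}^i$ needs extension, that the freshness side-conditions of $\join$ are preserved when a variable is appended, and that where $\mathsf{C}$-compatibility is invoked in $\join$ it follows from the monotonicity presupposition on $\sigma$. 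Once the cases are lined up carefully, each is a short manipulation of the defining clauses, but the sheer number of sub-cases for (ii) is where the care lies.
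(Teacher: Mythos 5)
The paper states this lemma without proof, so there is nothing to compare against; your argument --- induction on the recursive definition of $\Xi_1 \lub \Xi_2$, inverting $\geq$ (for (i)) and $\sqsupseteq$ (for (ii)) at the head variable of each clause, modulo exchange --- is the natural one and is correct. The only spot where you are terser than you should be is part (i) for the clauses that drop a provisional entry, e.g.\ $(\Xi_1,[x{:}A_m]) \lub \Xi_2 = \Xi_1 \lub \Xi_2$: after inverting $\Delta \geq (\Xi_1,[x{:}A_m])$ to peel off $x:A_m$ from $\Delta$, you must \emph{reattach} it to the conclusion $\Delta \geq \Xi_1 \lub \Xi_2$ via the weakening clause of $\geq$, which requires $\mathsf{W} \in \sigma(m)$; this is not a side condition of that $\lub$ clause, but it is recoverable from the second hypothesis $\Delta \geq \Xi_2$, since $x \in \mathrm{dom}(\Delta) \setminus \mathrm{dom}(\Xi_2)$ forces that derivation to discharge $x:A_m$ by the weakening clause. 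With that one sentence added, the proof is complete.
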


The complete set of rules for algorithmic typing can be found in \autoref{fig:algo}.

\begin{figure}[ht!]
\begin{rules}
  \infer[{\Rightarrow}/{\Leftarrow}]
  {\Gamma \vdash s \chk A_m \uses \Xi}
  {\Gamma \vdash s \syn A_m \uses \Xi}
  \hspace{1em}
  \infer[{\Leftarrow}/{\Rightarrow}]
  {\Gamma \vdash (e : A_m) \syn A_m \uses \Xi}
  {\Gamma \vdash e \chk A_m \uses \Xi}
  \hspace{1em}
  \infer[\ms{hyp}]
  {\Gamma \vdash x \syn (x : A_m)}
  {x : A_m \in \Gamma}
  \\[1em]
  \infer[{\lolli}I]
  {\Gamma \vdash \lambda x.\, e \chk A_m \lolli B_m \uses (\Xi \ctxrm x:A_m)}
  {\Gamma, x:A_m \vdash e \chk B_m \uses \Xi}
  \\[1em]
  \infer[{\lolli}E]
  {\Gamma \vdash s\, e \syn B_m \uses \Xi \join \Xi'}
  {\Gamma \vdash s \syn A_m \lolli B_m \uses \Xi
    & \Gamma \vdash e \chk A_m \uses \Xi'}
  \\[1em]
  \infer[{\with}I_0]
  {\Gamma \vdash \{\,\} \chk {\with}_m\{\,\} \uses [\Gamma|_m]}
  {}
  \hspace{2em}
  \infer[{\with}I]
  {\Gamma \vdash \{\ell \Rightarrow e_\ell\}_{\ell \in L}
    \chk {\with}\{\ell : A^\ell_m\}_{\ell \in L}
    \uses \lub_{\ell \in L} \Xi_{\ell}}
  {\Gamma \vdash e_\ell \chk A^\ell_m \uses \Xi_\ell\quad (\forall \ell \in L \neq \emptyset)}
  \\[1em]
  \infer[{\with}E]
  {\Gamma \vdash e.\ell \syn A_m^\ell \uses \Xi}
  {\Gamma \vdash e \syn {\with}\{\ell : A^\ell_m\}_{\ell \in L} \uses \Xi
    \quad (\ell \in L)}
  \\[1em]
  \infer[{\up}I]
  {\Gamma \vdash \mb{susp}\; e \chk \up_k^m A_k \uses \Xi\bbar_m}
  {\Gamma \vdash e \chk A_k \uses \Xi}
  \hspace{3em}
  \infer[{\up}E]
  {\Gamma \vdash \mb{force}\; s \syn A_k \uses \Xi}
  {\Gamma \vdash s \syn \up_k^m A_k \uses \Xi}
  \\[1em]
  \infer[{\tensor}I]
  {\Gamma \vdash (e_1,e_2) \chk A_m \tensor B_m \uses \Xi \join \Xi'}
  {\Gamma \vdash e_1 \chk A_m \uses \Xi
    & \Gamma \vdash e_2 \chk B_m \uses \Xi'}
  \\[1em]
  \infer[{\tensor}E]
  {\Gamma \vdash \mb{match}\; s\; ((x_1,x_2) \Rightarrow e') \chk C_r \uses \Xi \join (\Xi' \ctxrm x_1:A_m \ctxrm x_2:B_m)}
  {\Gamma \vdash s \syn A_m \tensor B_m \uses \Xi
    & m \geq r
    & \Gamma, x_1 : A_m, x_2 : B_m \vdash e' \chk C_r \uses \Xi'}
  \\[1em]
  \infer[{\one}I]
  {\Gamma \vdash (\,) \chk \one_m \uses (\cdot)}
  {}
  \hspace{3em}
  \infer[{\one}E]
  {\Gamma \vdash \mb{match}\; s\; ((\,) \Rightarrow e') \chk C_r \uses \Xi \join \Xi'}
  {\Gamma \vdash s \syn \one_m \uses \Xi
    & m \geq r
    & \Gamma \vdash e' \chk C_r \uses \Xi'}
  \\[1em]
  \infer[{\plus}I]
  {\Gamma \vdash \ell(e) \chk {\plus}\{\ell : A_m^\ell\}_{\ell \in L} \uses \Xi}
  {\Gamma \vdash e \chk A_m^\ell \uses \Xi \quad (\ell \in L)}
  \hspace{3em}
  \infer[{\plus}E_0]
  {\Gamma \vdash \mb{match}\; s\; (\,) \chk C_r \uses \Xi \semi [\Gamma|_r]}
  {\Gamma \vdash s \syn \oplus_m\{\,\} \uses \Xi
    & m \geq r}
  \\[1em]
  \infer[{\plus}E]
  {\Gamma \vdash \mb{match}\; s\; (\ell(x) \Rightarrow e_\ell)_{\ell \in L}
    \chk C_r \uses \Xi \join \lub_{\ell \in L}(\Xi'_\ell \ctxrm x:A_\ell)}
  {\Gamma \vdash s \syn {\plus}\{\ell : A_m^\ell\}_{\ell \in L} \uses \Xi
    & m \geq r
    & \Gamma, x : A_m^\ell \vdash e_\ell \chk C_r \uses \Xi'_\ell \quad (\forall \ell \in L \neq \emptyset)}
  \\[1em]
  \infer[{\down}I]
  {\Gamma \vdash \mb{down}\; e \chk \down^n_m A_n \uses \Xi}
  {\Gamma \vdash e \chk A_n \uses \Xi}
  \\[1em]
  \infer[{\down}E]
  {\Gamma \vdash \mb{match}\; s\; (\mb{down}\; x \Rightarrow e') \chk C_r \uses \Xi \join (\Xi' \ctxrm x:A_n)}
  {\Gamma \vdash s \syn \down^n_m A_n \uses \Xi
    & m \geq r
    & \Gamma, x : A_n \vdash e' \chk C_r \uses \Xi'}
\end{rules}
  \caption{Algorithmic Typing for Natural Deduction}
  \label{fig:algo}
\end{figure}

\begin{lemma}
  If $\Gamma \vdash e \Longleftrightarrow A_m \uses \Xi$
  then $\Xi \geq m$
\end{lemma}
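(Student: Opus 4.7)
The plan is to prove the lemma by straightforward rule induction on the derivation of $\Gamma \vdash e \Longleftrightarrow A_m \uses \Xi$, treating the checking and synthesis judgments simultaneously. The hypothesis rule is immediate since the output is just $x : A_m$. For the two directional change rules, the invariant is preserved without change.

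Before the main induction, I would establish (or cite) a small collection of monotonicity lemmas about the context-manipulating operations, each essentially immediate from its defining clauses: (a) if $\Xi_1 \geq m$ and $\Xi_2 \geq m$ then $\Xi_1 \join \Xi_2 \geq m$; (b) if $\Xi \geq m$ then $\Xi \ctxrm x : A_k \geq m$; (c) if $\Xi_1 \geq m$ and $\Xi_2 \geq m$ then $\Xi_1 \lub \Xi_2 \geq m$; (d) $\Xi \bbar_m \geq m$ and $[\Gamma |_m] \geq m$ directly by construction. I would also use transitivity of the preorder freely: $\Xi \geq n$ and $n \geq m$ imply $\Xi \geq m$.

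For the introduction rules of negative connectives (${\lolli}I$, ${\with}I$, ${\up}I$), the mode of the conclusion matches the mode given by the induction hypothesis, possibly after applying (b), (c), or (d). The nullary ${\with}I_0$ is handled directly by (d). For ${\lolli}E$ and ${\tensor}I$, both premises yield contexts $\geq m$ and we apply (a). The synthesis rules ${\with}E$ and ${\up}E$ are immediate, with ${\up}E$ requiring one use of transitivity: from $\Xi \geq m$ and the presupposition $m \geq k$, we get $\Xi \geq k$. The introduction rule ${\down}I$ is the symmetric case: the premise gives $\Xi \geq n$, and since $n \geq m$ by presupposition, transitivity gives $\Xi \geq m$.

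For the elimination rules of positive connectives (${\tensor}E$, ${\one}E$, ${\plus}E_0$, ${\plus}E$, ${\down}E$), the first premise synthesizes at mode $m$, giving $\Xi \geq m$, but the side condition $m \geq r$ together with transitivity lifts this to $\Xi \geq r$; the second premise (when present) gives a context $\Xi' \geq r$ directly; removing the bound hypotheses from $\Xi'$ preserves $\geq r$ by (b), and merging with $\Xi$ or with a least upper bound preserves $\geq r$ by (a) and (c). The ${\plus}E_0$ rule additionally merges with $[\Gamma |_r]$, which is $\geq r$ by (d). The main obstacle, if any, is simply bookkeeping: there are many rules, and one must keep careful track of which mode the induction hypothesis delivers and apply transitivity at exactly the right points, particularly where the mode of the subject being eliminated differs from the mode of the conclusion.
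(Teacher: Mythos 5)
Your proposal is correct and takes essentially the same approach as the paper, whose entire proof is ``by straightforward rule induction''; you have simply made explicit the auxiliary closure properties of $\join$, $\ctxrm$, $\lub$, $\bbar_m$, and $[\,\cdot\,|_m]$ and the points where transitivity of $\geq$ is needed (the shift rules and the positive eliminations).
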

\begin{proof}
  By straightforward rule induction.
\end{proof}

\begin{lemma}[Context Extension]
  If $\Gamma \vdash e \Longleftrightarrow A_m \uses \Xi$ and $\Xi \sqsupseteq \Delta$
  and $\Gamma' \supseteq \Gamma$ then $\Gamma' \vdash e \Longleftrightarrow A_m \uses \Xi'$
  for some $\Xi'$ with $\Xi' \sqsupseteq \Delta$.
\end{lemma}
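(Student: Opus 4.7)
The proof proceeds by rule induction on the derivation of $\Gamma \vdash e \Longleftrightarrow A_m \uses \Xi$. For each rule I would apply the induction hypothesis to each premise after replacing $\Gamma$ by $\Gamma'$ (and extending with any newly bound variable, as in $\lolli I$, $\tensor E$, $\plus E$, $\down E$), obtaining new ``used'' contexts $\Xi^*$. I then reassemble the conclusion by applying the same combining operation---$\ctxrm$, $\bbar_m$, $\lub$, $\join$, or the restriction $[\cdot|_m]$ now taken with respect to $\Gamma'$---to the $\Xi^*$'s, and verify that the result is some $\Xi'$ with $\Xi' \sqsupseteq \Delta$.

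The crucial invariant is that any variable in $\Gamma' \setminus \Gamma$, not occurring free in $e$, can appear in a resulting $\Xi^*$ only as a provisional hypothesis, and each of our context operations treats such entries benignly with respect to $\sqsupseteq$. For $\ctxrm\, x{:}A_m$ in $\lolli I$, I would pick the target $\Delta_0$ for the IH on the premise by case analysis: if $x$ occurs in $\Xi$ (as used or provisional), take $\Delta_0 = (\Delta, x{:}A_m)$; if $x$ is absent (which forces $\mathsf{W} \in \sigma(m)$), take $\Delta_0 = \Delta$. For $\bbar_m$ in $\up I$, any extra provisional hypotheses of modes $\not\geq m$ are simply dropped, preserving the relationship to $\Delta$. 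For $[\Gamma'|_m]$ in $\with I_0$ and $\plus E_0$, the additional provisional entries introduced by $\Gamma' \setminus \Gamma$ are absorbed by the third clause of $\sqsupseteq$.

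The branching rules ($\with I$, $\plus E$) and those combining premise contexts by $\join$ require distributing $\Delta$ across the premises. For $\lub$ I would invoke the IH on each branch with the same target $\Delta$, relying on the easily verified fact that $\Xi_1 \sqsupseteq \Delta$ and $\Xi_2 \sqsupseteq \Delta$ imply $\Xi_1 \lub \Xi_2 \sqsupseteq \Delta$. For $\join$ I would first invert the assumption $\Xi_1 \join \Xi_2 \sqsupseteq \Delta$ to obtain a split $\Delta = \Delta_1 \join \Delta_2$ with $\Xi_i \sqsupseteq \Delta_i$, apply the IH to each premise independently, and remerge. Both supporting facts follow by direct unfolding of the definitions of $\sqsupseteq$, $\lub$, and $\join$.

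The main obstacle is bookkeeping rather than depth: there are many rules, and each requires tracking how provisional hypotheses propagate through compositions of operations. The trickiest cases are those combining $\ctxrm$ with an outer $\lub$ or $\join$---for example $\plus E$, where a bound variable is removed inside a $\lub$ inside a $\join$---but in each such case the proof reduces to unfolding the definitions and checking that every provisional hypothesis contributed by $\Gamma' \setminus \Gamma$ is still absorbed by $\sqsupseteq$ in the final $\Xi'$.
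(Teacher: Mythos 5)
Your proposal takes essentially the same route as the paper, whose entire proof is the one-liner ``by rule induction on the given derivation and inversion on the definition of $\Xi \sqsupseteq \Delta$.'' The details you supply---the invariant that variables from $\Gamma' \setminus \Gamma$ enter $\Xi'$ only provisionally (via $[\Gamma|_m]$ in ${\with}I_0$/${\plus}E_0$) and are absorbed by the third clause of $\sqsupseteq$, together with the compatibility facts for $\ctxrm$, $\bbar_m$, $\lub$, and $\join$---are exactly the bookkeeping the paper elides.
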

\begin{proof}
  By rule induction on the given derivation and inversion on the definition of
  $\Xi \sqsupseteq \Delta$.
\end{proof}

\begin{proof}
  (of soundness, \autoref{thm:algo-sound}) By rule induction on the algorithmic
  typing derivation and inversion of the $\Xi \sqsupseteq \Delta$ judgment.
\end{proof}

\begin{proof}
  (of completeness, \autoref{thm:algo-complete}) By rule induction on the given
  bidirectional typing.
\end{proof}

\begin{corollary}
  $\cdot \vdash e \chk A_m$ iff $\cdot \vdash e \chk A_m / \cdot$
\end{corollary}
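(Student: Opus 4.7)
The plan is to obtain both directions immediately from the soundness and completeness theorems (\autoref{thm:algo-sound} and \autoref{thm:algo-complete}) by observing that the $\geq$ and $\sqsupseteq$ relations degenerate when one side is the empty context.

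For the forward direction, assume $\cdot \vdash e \chk A_m$. Completeness (\autoref{thm:algo-complete}) gives some $\Xi$ with $\cdot \vdash e \chk A_m \uses \Xi$ and $\cdot \geq \Xi$. Inspecting the definition of $\Delta \geq \Xi$, the only clause whose left-hand side is the empty context is $(\cdot) \geq (\cdot)$, since every other clause requires a nonempty $\Delta$ on the left. Hence $\Xi = \cdot$, which is exactly what we need.

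For the backward direction, assume $\cdot \vdash e \chk A_m \uses \cdot$. By the base clause of $\sqsupseteq$ we have $(\cdot) \sqsupseteq (\cdot)$, so soundness (\autoref{thm:algo-sound}) applied with $\Delta = \cdot$ yields $\cdot \vdash e \chk A_m$. No further analysis is needed; the only mild subtlety is noting that $(\cdot) \sqsupseteq \Delta$ forces $\Delta = \cdot$ (the $\sqsupseteq$ clauses all either preserve the length of $\Delta$ on the right or only drop provisional entries from the left, so an empty $\Xi$ cannot be compared with a nonempty $\Delta$), which is why the soundness direction produces exactly $\cdot \vdash e \chk A_m$ and not some other residual typing. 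There is no substantive obstacle here — the corollary really is a two-line specialization of the two main theorems of the section.
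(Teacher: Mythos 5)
Your proof is correct and takes exactly the approach the paper intends: the corollary is stated without proof precisely because it is the immediate specialization of soundness (\autoref{thm:algo-sound}) and completeness (\autoref{thm:algo-complete}) to the empty context, and your inspection of the $\geq$ and $\sqsupseteq$ clauses to conclude $\Xi = \cdot$ in the forward direction is the only detail that needs checking. One minor remark: your closing observation that $(\cdot) \sqsupseteq \Delta$ forces $\Delta = \cdot$ is not actually needed, since in the backward direction you \emph{choose} $\Delta = \cdot$ and verify $(\cdot) \sqsupseteq (\cdot)$ rather than solving for $\Delta$ — but this does not affect correctness.
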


\section{Conclusion}
\label{sec:conclusion}

There have been recent proposals to extend the adjoint approach to combining
logics to dependent types.  \citet{Licata16lfcs,Licata17fscd} permit dependent
types and richer connections between the logics that are combined, but certain
properties such as independence are no longer fundamental, but have to be proved
in each case where they apply.  While they mostly stay within a sequent
calculus, they also briefly introduce natural deduction.  They also provide a
categorical semantics.  \citet{Hanukaev23tyde} also permit dependent types and
use the graded/algebraic approach to defining their system.  However, their
approach to dependency appears incompatible with control of contraction, so
their adjoint structure is not nearly as general as ours.  They also omit empty
internal choice (and external choice altogether), which created some of the
trickiest issues in our system.  Neither of these proposes an algorithm for type
checking or an operational semantics that would exploit the substructural and
mode properties to obtain ``free theorems'' about well-typed programs.

We are pursuing several avenues of future work building on the results of this
paper.  On the foundational side, we are looking for a direct algorithm to
convert an arbitrary natural deduction into a verification.  On the programming
side, we are considering \emph{mode polymorphism}, that is, type-checking the
same expression against multiple different modes to avoid code duplication.  On
the application side, we are considering staged computation, quotation, and
metaprogramming, decomposing the usual type $\Box A$ or its contextual analogue
along the lines of \autoref{ex:s4}.

\bibliographystyle{plainnat}
\bibliography{fp-copy,additional-bib}

\appendix

\end{document}